\renewcommand\footnoterule{%
  \kern-3\p@
  \hrule\@width.4\columnwidth
  \kern2.6\p@}
\renewcommand\@makefntext[1]{%
    \parindent 1em\noindent
    \hb@xt@1.8em{\hss$^{\@thefnmark}$)}\hspace{2pt}%
    \footnotesize\rmfamily#1}  
\def\@makefnmark{\hspace{.5pt}\hbox{$^{\@thefnmark}$%
\hspace{-1pt})}} \setcounter{footnote}{0}
\renewcommand\footnoterule{%
  \kern-3\p@
  \hrule\@width.4\columnwidth
  \kern2.6\p@}
\renewcommand\@makefntext[1]{%
    \parindent 1em\noindent
    \hb@xt@1.8em{\hss$^{\@thefnmark}$)}\hspace{2pt}%
    \footnotesize\rmfamily#1}  
\def\@makefnmark{\hspace{.5pt}\hbox{$^{\@thefnmark}$%
\hspace{-1pt}}} \setcounter{footnote}{0}
 \newtheorem{thm}{Theorem}[section]
 \newtheorem{cor}[thm]{Corollary}
 \newtheorem{lem}[thm]{Lemma}
 \newtheorem{prop}[thm]{Proposition}
 \theoremstyle{definition}
 \newtheorem{defn}[thm]{Definition}
 \theoremstyle{remark}
 \newtheorem{rem}[thm]{Remark}
\newcommand{\Lc}{{\mathcal L}}
\newcommand{\Hil}{\mathcal H}
\newcommand{\cH}{\mathcal{H}}
\newcommand{\cC}{\mathcal{C}}
\newcommand{\eqref}[1]{(\ref{#1})}
\begin{document}
\title[Generalized Riesz systems and orthonormal sequences]{Generalized Riesz systems and orthonormal sequences in Krein spaces}

\author{Fabio Bagarello${}^a$\  and \ Sergiusz Ku\.{z}el${}^b$}

\address{${}^a$\
DEIM -Dipartimento di Energia, ingegneria dell' Informazione e modelli Matematici,
Scuola Politecnica, Universit\`a di Palermo, I-90128  Palermo, Italy
and  INFN, Sezione di Napoli.  \\
${}^b$\  AGH University of Science and Technology, Faculty of Applied Mathematics, 30-059 al. Mickiewicza 30, Krak\'{o}w, Poland}
\eads{\mailto{fabio.bagarello@unipa.it},\ 
\mailto{kuzhel@agh.edu.pl}}
\begin{abstract}
We analyze special classes of bi-orthogonal sets of vectors in Hilbert and in Krein spaces, 
and their relations with generalized Riesz systems. In this way, the notion of the first/second type sequences
is introduced and studied.   
We also discuss their relevance in some concrete quantum mechanical system driven by manifestly non self-adjoint Hamiltonians. 
\end{abstract}

\pacs{02.30.Sa,  02.30.Tb, 03.65.Ta}


%
%
%
%
%
%
%
%
%


\section{Introduction}
The employing of non self-adjoint operators for the description of experimentally observable data
goes back to the early days of quantum mechanics. In the past twenty years, the steady interest in this subject grew 
considerably after it has been discovered  \cite{B1, D2}  that the spectrum of the 
manifestly non self-adjoint Hamiltonian
\begin{equation}\label{e1b}
    H=-\frac{d^2}{dx^2} + x^2(ix)^\epsilon, \qquad  0<\epsilon<2
\end{equation}
 is real. It was conjectured \cite{B1} that the reality of eigenvalues of $H$ is a consequence of its
$\mathcal{P}\mathcal{T}$-symmetry: ${\mathcal P}{\mathcal
T}H=H{\mathcal P}{\mathcal T}$,  where the space 
parity operator $\mathcal{P}$ and the complex conjugation operator
$\mathcal{T}$ are defined as follows:
$({\mathcal P}f)(x)=f(-x)$ and $({\mathcal T}f)(x)=\overline{f(x)}.$
This gave rise to a consistent development of theory of $\mathcal{PT}$-symmetric Hamiltonians in Quantum Physics,
see  \cite{bagbook_thebook,  bagpsasstra, BE} and  references therein.

Usually, $\mathcal{PT}$-symmetric Hamiltonians can be
interpreted as self-adjoint ones for a suitable choice of \emph{indefinite} inner product.
For instance, the operator $H$ in (\ref{e1b}) is  self-adjoint with respect to the indefinite inner product
$[f,g]=\int_{-\infty}^{\infty}f(-x)\overline{g(x)}dx$ in the Krein space  $(\Lc^2(\Bbb R), [\cdot,\cdot])$ 
(see  Subsection \ref{sec4a} for the definition of Krein spaces).  The eigenstates of $H$ lose the property of being Riesz basis in the original Hilbert space 
$\Lc^2(\Bbb R)$ but they still form a complete set in $\Lc^2(\Bbb R)$ \cite{new1, new2}.
 Moreover, they form a sequence which is orthonormal with respect to the indefinite
inner product $[\cdot, \cdot]$. 
Such kind of phenomenon is typical for  $\mathcal{PT}$-symmetric Hamiltonians and it gives rise to a natural problem:
\emph{What can we say about properties of vectors which form a complete set in a Hilbert space and, 
additionally,   are orthonormal with respect
to the indefinite inner product?}

The main objective of the paper is to investigate  this problem with the use of theory of generalized Riesz systems (GRS) and 
$\mathcal{G}$-quasi bases. These two concepts were originally introduced in \cite{Inoue} and \cite{bag2013JMP}, respectively 
and then analyzed in a series of papers, see, e.g. \cite[Chapter 3]{bagbook_thebook}, \cite{BB, BIT},  \cite{Inoue1}. 
The motivation for introducing GRS and $\mathcal{G}$-quasi bases was the need to put on a mathematically rigorous 
ground several physical models where the eigenstates of some non self-adjoint operator 
were usually claimed to be bases, while they were not.  

The paper is structured as follows.  Section \ref{sec2} contains facts related to GRS 
and $\mathcal{G}$-quasi bases.  We slightly change the original definition of GRS given in \cite{Inoue}  
putting in evidence the role of a self-adjoint operator $Q$, 
since it is more convenient for our purpose:
\emph{a sequence $\{\phi_n\}$ in a Hilbert space $\cH$ is called a generalized Riesz system (GRS) if there exists a self-adjoint operator $Q$ in $\cH$ and 
an orthonormal basis (ONB) $\{e_n\}$  such that  $e_n\in{D}(e^{Q/2})\cap{D}(e^{-Q/2})$ and $\phi_n=e^{Q/2}e_n$.}

A GRS $\{\phi_n\}$ is a Riesz basis if and only if $Q$ is a bounded operator.  

For each GRS  $\{\phi_n\}$, the dual  GRS is determined by the formula $\{\psi_n=e^{-Q/2}e_n\}$.
The dual GRS $\{\phi_n\}$ and  $\{\psi_n\}$ are biorthogonal. 

 In Subsection \ref{subsection2.2}, the phenomenon of nonuniqueness of a self-adjoint operator $Q$ in the formulas
 \begin{equation}\label{intr1}
 \phi_n=e^{Q/2}e_n, \qquad   \psi_n=e^{-Q/2}e_n
 \end{equation}
 is investigated in the case of regular biorthogonal sequences $\{\phi_n\}$ and  $\{\psi_n\}$. 
We firstly prove that the operator $Q$ and ONB $\{e_n\}$ are determined uniquely for dual bases.
In general, this property does not hold for regular biorthogonal sequences. 
To describe all possible self-adjoint operators $Q$ in \eqref{intr1}  we consider
a positive densely defined operator $G_0$ which acts as 
$G_0\phi_n=\psi_n$ and then is extended on $D(G_0)=span\{\phi_n\}$  by the linearity. 
The proved statement is: \emph{for given $\{\phi_n\}$ and  $\{\psi_n\}$, the set of admissible operators $Q$ in
\eqref{intr1} is in one-to-one correspondence with the set of extremal
extensions $G$ of $G_0$, precisely, $Q=-\ln{G}$.}   This fact allows one to characterize the important
case where $Q$ is determined uniquely.   

Section \ref{sec4} contains the main results. We begin with the simple fact that each complete sequence
 $\{\phi_n\}$ which is orthonormal in a Krein space $(\cH, [\cdot,\cdot])$  (briefly,  $J$-orthonormal) 
 is a GRS and therefore, it can be expressed by \eqref{intr1} for some choice of $Q$. 
If $Q$  is uniquely determined, then the  anti-commutation relation  $JQ=-QJ$ holds. 
For this reason, the anti-commutation relation is reasonable to keep in the general case
(if $Q$ is not determined uniquely, then we cannot state, in general, that $JQ=-QJ$): 
we say that a complete $J$-orthonormal sequence $\{\phi_n\}$ is of \emph{the first type} if 
there  exists  $Q$ in \eqref{intr1} such that  $JQ=-QJ$. Otherwise,  $\{\phi_n\}$ is of \emph{the second type}.

We proved that the formula \eqref{intr1} defines first type sequences if and only if $Q$ anticommutes with 
$J$ and elements of ONB $\{e_n\}$ are eigenvectors of $J$.

The first type  sequences have a lot of useful properties.
One of benefits is the fact that a first type sequence  generates a $\cC$-symmetry operator $\cC=e^QJ$
where $Q$ is the same operator as in \eqref{intr1}.  The latter allows one to construct the Hilbert space  $(\cH_{-Q}, <\cdot, \cdot>_{-Q})$
involving $\{\phi_n\}$ as ONB,  directly as the completion of $D(\cC)$ with respect to ``$\cC\mathcal{PT}$-norm'': 
 \begin{equation}\label{AGH19b}
 <\cdot, \cdot>_{-Q}=[\cC\cdot, \cdot]=<Je^QJ\cdot,\cdot>=<e^{-Q}\cdot, \cdot>.
 \end{equation}

For a second type sequence,  the inner product  $<\cdot,\cdot>_{-Q}$  generated by an operator $Q$  from \eqref{intr1} cannot be expressed via  $[\cdot, \cdot]$
 and one should apply more efforts for the determination of $<\cdot,\cdot>_{-Q}$, see Subsection \ref{sec3.2}.

Assume that a complete $J$-orthonormal sequence $\{\phi_n\}$ consists of eigenfunctions of a Hamiltonian $H$
 corresponding to real eigenvalues $\{\lambda_n\}$. The sequence $\{\phi_n\}$ generates $\cC$-symmetry operators  
(Proposition \ref{NEE}). If $\{\phi_n\}$ is of the first type, then there exists at least one $\cC$ such that the operator $H$ restricted on 
 $\mbox{span}\{\phi_n\}$ turns out to be essential self-adjoint in the Hilbert space $\cH_{-Q}$ with the inner product \eqref{AGH19b}.
 The spectrum of the closure of $H$ in $\cH_{-Q}$ coincides with the closure of $\{\lambda_n\}$. Hence,  
 we construct an isospectral realization of $H$ in $\cH_{-Q}$.
 
 For a second type sequence, each operator $\cC$ generated by  $\{\phi_n\}$ gives rise to the Hilbert space  $(\cH_{-Q}, <\cdot, \cdot>_{-Q})$
 with \emph{non-densely defined symmetric operator $H$}. 
 Its extensions to self-adjoint operators in $\cH_{-Q}$ lead to the appearance
of new spectral points. Therefore, self-adjoint realizations constructed with the use of $\cC\mathcal{PT}$-norm cannot be isospectral.
The isospectrality of self-adjoint realizations of $H$ in $\cH_{-Q}$ can be achieved via the construction of the Friedrichs extension $G=e^{-Q}$ of
the symmetric operator $G_0\phi_n=[\phi_n, \phi_n]J\phi_n$ defined on $\mbox{span}\{\phi_n\}$, that
 is quite complicated problem. 

Section \ref{sec5} contains examples of $J$-orthonormal sequences. We show that
the eigenstates of the shifted harmonic oscillator constitute a first type sequence.  This example leads to the following
conjecture: \emph{eigenstates of a $\mathcal{PT}$-symmetric Hamiltonian $H$ with \emph{unbroken} $\mathcal{PT}$-symmetry \cite[p. 41]{BE}
form a first type sequence}.  

In what follows, $\mathcal{H}$ means a complex Hilbert space with inner product  linear in the first argument.
Sometimes, it is useful to specify the inner product $<\cdot,\cdot>$ associated with $\cH$.
In that case the notation $(\cH, <\cdot,\cdot>)$ has already been used, and will be used in the following. 
All operators in $\mathcal{H}$ are supposed to be linear, the identity operator is denoted by $I$.
The symbols ${D}(A)$ and ${R}(A)$ denote the domain and the range of a linear operator
$A$. An operator $A$ is called  positive [nonnegative] if   $(Af,f)>0$  \  [$(Af,f)\geq{0}$] for non-zero $f\in{D}(A)$. 

\section{Generalized Riesz systems in Hilbert spaces}\label{sec2}
Let $\{\phi_n\}$ be a Riesz basis in $\cH$. Then there exists a bounded and boundedly invertible operator
$R$ such that $\phi_n=Re_n'$,  where $\{e_n'\}$ is an orthonormal basis (ONB) of $\cH$.
The  operator $RR^*$ is positive and self-adjoint in $\cH$ and it admits
the presentation $RR^*=e^Q$, where $Q$ is a bounded self-adjoint operator. 
The polar decomposition of $R$ has the form $R=\sqrt{RR^*}U=e^{Q/2}U$, where $U$ is a unitary operator in $\cH$. 
The unitarity of $U$ means that $\{e_n=Ue_n'\}$ is an ONB of $\cH$ and  we can  rewrite the definition of Riesz bases as follows:
 \emph{a sequence $\{\phi_n\}$ is called a Riesz basis if there exists a bounded self-adjoint operator $Q$ in $\cH$ and 
an ONB $\{e_n\}$  such that  $\phi_n=e^{Q/2}e_n$.}
This simple observation leads to: 

\begin{defn}\label{d1}
	A sequence $\{\phi_n\}$ is called a generalized Riesz system (GRS) if there exists a self-adjoint operator $Q$ in $\cH$ and 
	an ONB $\{e_n\}$  such that  $e_n\in{D}(e^{Q/2})\cap{D}(e^{-Q/2})$ and $\phi_n=e^{Q/2}e_n$.
\end{defn}

Let $\{\phi_n\}$ be a GRS. In view of Definition \ref{d1}, the sequence  
$\{\psi_n=e^{-Q/2}e_n\}$ is well defined and it is a biorthogonal sequence for $\{\phi_n=e^{Q/2}e_n\}$.
 Obviously, $\{\psi_n\}$ is a GRS which we will call \emph{a dual GRS}.
 Dual GRS are Riesz bases if and only if $Q$ is a bounded operator. 

{\bf Example 1:--}
 A first simple example of GRS can be extracted from \cite{BB}: if we take $Q=-\frac{x^2}{2}$, $x$ being the position operator, 
 it is clear that $D(e^{Q/2})=\Lc^2(\Bbb R)$, while
$$
D(e^{-Q/2})=D(e^{x^2/4})=\left\{f(x)\in\Lc^2(\Bbb R): \: e^{x^2/4}f(x)\in\Lc^2(\Bbb R) \right\}.
$$
This set is dense in $\Lc^2(\Bbb R)$, since contains each eigenfunction of the quantum harmonic oscillator
\begin{equation}\label{AGH88}
e_n(x)=\frac{1}{\sqrt{2^nn!\sqrt{\pi}}}H_n(x)e^{-x^2/2},  \quad H_n(x)=e^{x^2/2}\left(x-\frac{d}{dx}\right)^ne^{-x^2/2}.
\end{equation}
The Hermite functions $\{e_n(x)\}$ form an orthonormal basis  of $\Lc^2(\mathbb{R})$.

Following \cite{BB}, we have
$\phi_n(x)=e^{Q/2}e_n(x)=\frac{1}{\sqrt{2^n\,n!\sqrt{\pi}}}\,H_n(x)\,e^{-\frac{3x^2}{4}}$, \
$\psi_n(x)=e^{-Q/2}e_n(x)=\frac{1}{\sqrt{2^n\,n!\sqrt{\pi}}}\,H_n(x)\,e^{-\frac{x^2}{4}}$
and  both $\{\phi_n\}$ and $\{\psi_n\}$ are  GRS, one the dual of the other. Of course, due to the unboundedness of $Q$, they are not Riesz bases.
\vspace{3mm}

In general, the inner product 
\begin{equation}\label{new1}
<f, g>_{-Q}:=<e^{-Q}f, g>=<e^{-Q/2}f, e^{-Q/2}g>,  \qquad f,g\in{D}(e^{-Q})
\end{equation}
is not equivalent to $<\cdot,\cdot>$ and the linear space $D(e^{-Q})$ endowed with $<\cdot,\cdot>_{-Q}$  is only a pre-Hilbert
space. Denote by $\cH_{-Q}$ the completion of  $D(e^{-Q})$ with respect to $<\cdot,\cdot>_{-Q}$. 
 Analogously,  the Hilbert space $(\cH_{Q}, <\cdot, \cdot>_Q)$ is defined as the completion of $D(e^Q)$ with respect to the inner product
\begin{equation}\label{new1b}
<f, g>_{Q}:=<e^{Q}f, g>=<e^{Q/2}f, e^{Q/2}g>,  \qquad f,g\in{D}(e^{Q}).
\end{equation}

Generally, the Hilbert spaces
$\cH_{-Q}$ and $\cH_{Q}$ differ from $\cH$ (as the sets of elements). 
We can just say, in view of \eqref{new1} and  \eqref{new1b} that
$D(e^{-Q/2})\subset\cH_{-Q}$ and $D(e^{Q/2})\subset\cH_{Q}$.

\begin{lem}\label{new23}
	Let $\{e_n\}$, $e_n\in{D}(e^{Q/2})\cap{D}(e^{-Q/2})$, be a complete set in $\cH$. 
	Then $\{\phi_n=e^{Q/2}e_n\}$  and  $\{\psi_n=e^{-Q/2}e_n\}$  are complete sets  in the Hilbert spaces $\cH_{-Q}$ and 
	$\cH_{Q}$, respectively.
\end{lem}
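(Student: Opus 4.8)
The plan is to exploit the unitary-like behaviour of $e^{Q/2}$ and $e^{-Q/2}$ as maps between the original space $\cH$ and the weighted spaces $\cH_{\pm Q}$. First I would verify that $e^{Q/2}$ extends to an isometry from $\cH$ onto $\cH_Q$, and $e^{-Q/2}$ to an isometry from $\cH$ onto $\cH_{-Q}$. Indeed, by the very definition \eqref{new1b} of $\langle\cdot,\cdot\rangle_Q$, for $f\in D(e^{Q/2})$ we have $\langle e^{Q/2}f, e^{Q/2}f\rangle_Q = \langle e^Q e^{Q/2} f, e^{Q/2} f\rangle$; to make this legitimate one works instead with the functional-calculus identity $\langle e^{Q/2}f, e^{Q/2}g\rangle_Q = \langle e^{Q}(e^{Q/2}f), e^{Q/2}g\rangle = \langle e^{Q/2}(e^{Q/2}f), e^{Q/2}(e^{Q/2}g)\rangle = \langle e^Q f, e^Q g\rangle$? — rather, the clean statement is simply $\langle e^{Q/2}f, e^{Q/2}g\rangle_Q = \langle f,g\rangle$ whenever $f,g\in D(e^{Q/2})$, which is immediate from \eqref{new1b} once one reads it as $\langle u,v\rangle_Q = \langle e^{Q/2}u, e^{Q/2}v\rangle$ applied to $u=e^{Q/2}f$? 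No: the direct route is that $\langle e^{Q/2}f, e^{Q/2}g\rangle_Q$ is \emph{defined} via the completion, and on the core $D(e^Q)$ it equals $\langle e^{Q}\cdot,\cdot\rangle$, so on vectors of the form $e^{Q/2}f$ with $f \in D(e^{Q/2})$ — which need not lie in $D(e^Q)$ — one passes to the limit. So I would instead argue directly: the map $V_Q\colon \cH\to\cH_Q$ determined on $D(e^{Q/2})$ by $V_Q f = e^{Q/2} f$ satisfies $\|V_Q f\|_Q^2 = \langle e^{Q} e^{Q/2} f? \rangle$ — the robust formulation is $\|e^{Q/2}f\|_Q^2 = \|f\|^2$, proved by spectral calculus for $Q$: writing $Q=\int \lambda\, dE_\lambda$, both sides equal $\int e^{\lambda} e^{\lambda}\, d\langle E_\lambda f, f\rangle$ versus $\int e^{\lambda} d\langle E_\lambda f,f\rangle$, so one must be careful and use that $\langle u,v\rangle_Q$ with $u = e^{Q/2}f$ means $\int e^{\lambda}\, d\langle E_\lambda u, u\rangle = \int e^{\lambda} e^{2\cdot(\lambda/2)} d\langle E_\lambda f,f\rangle$. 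This does \emph{not} reduce to $\|f\|^2$; rather the natural isometry sends $e^n \mapsto \phi_n$ with $\phi_n$ becoming orthonormal in $\cH_Q$ only after renormalization. Let me restart the bookkeeping cleanly below.

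\emph{Corrected plan.} The key observation is the trivial but decisive identity: for $f,g\in D(e^{Q/2})$,
\[
  \langle e^{-Q/2}f,\, e^{-Q/2}g\rangle_{-Q} \;=\; \langle e^{-Q}e^{-Q/2}f,\, e^{-Q/2}g\rangle
\]
is \emph{not} what I want; instead I use \eqref{new1} in the form $\langle u,v\rangle_{-Q} = \langle e^{-Q/2}u, e^{-Q/2}v\rangle$ and then note that the dual GRS vectors are themselves $\psi_n = e^{-Q/2}e_n$, so $\langle \psi_n, g\rangle_{-Q} = \langle e^{-Q/2}\psi_n, e^{-Q/2}g\rangle$ for $g \in D(e^{-Q})$; this is awkward because $\psi_n$ need not lie in $D(e^{-Q})$. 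The genuinely clean argument avoids the spaces $\cH_{\pm Q}$ entirely on the $\psi$ side: to show $\{\phi_n\}$ is complete in $\cH_{-Q}$, take $h\in\cH_{-Q}$ with $\langle \phi_n, h\rangle_{-Q}=0$ for all $n$; it suffices to show $h=0$. Since $D(e^{-Q/2})$ is a dense subspace of $\cH_{-Q}$ and the $\phi_n = e^{Q/2}e_n$ need to be paired against $h$, approximate $h$ in $\cH_{-Q}$-norm by $h_k\in D(e^{-Q/2})$; then $\langle \phi_n, h_k\rangle_{-Q} = \langle e^{-Q/2}\phi_n, e^{-Q/2}h_k\rangle = \langle e^{-Q/2}e^{Q/2}e_n, e^{-Q/2}h_k\rangle = \langle e_n, e^{-Q/2}h_k\rangle$, using that $e_n\in D(e^{Q/2})\cap D(e^{-Q/2})$ legitimizes $e^{-Q/2}e^{Q/2}e_n = e_n$ via spectral calculus. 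Thus $\langle \phi_n, h\rangle_{-Q} = \lim_k \langle e_n, e^{-Q/2}h_k\rangle$. Now $\{e^{-Q/2}h_k\}$ is Cauchy in $\cH$ (it equals $\|h_k - h_j\|_{-Q}$ in norm), hence converges to some $w\in\cH$, and $0 = \langle \phi_n, h\rangle_{-Q} = \langle e_n, w\rangle$ for every $n$. By completeness of $\{e_n\}$ in $\cH$ we get $w=0$, i.e. $e^{-Q/2}h_k \to 0$ in $\cH$, i.e. $\|h_k\|_{-Q}\to 0$, so $h = \lim_k h_k = 0$ in $\cH_{-Q}$. This proves $\{\phi_n\}$ is complete in $\cH_{-Q}$; the statement for $\{\psi_n\}$ in $\cH_Q$ is obtained by the same argument with $Q$ replaced by $-Q$, swapping the roles of $\phi_n$ and $\psi_n$.

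\emph{Main obstacle.} The only delicate point is the manipulation $\langle e^{-Q/2}\phi_n, e^{-Q/2}h_k\rangle = \langle e_n, e^{-Q/2}h_k\rangle$, i.e. the cancellation $e^{-Q/2}e^{Q/2}e_n = e_n$. This is where the hypothesis $e_n\in D(e^{Q/2})\cap D(e^{-Q/2})$ is essential: on the spectral subspaces $E_{[-N,N]}\cH$ the operators $e^{\pm Q/2}$ are bounded and commute, so $e^{-Q/2}e^{Q/2}E_{[-N,N]}e_n = E_{[-N,N]}e_n$; letting $N\to\infty$ and using that $e_n$ lies in both domains lets one pass to the limit on both sides. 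I would state this as a short lemma on functional calculus or simply cite the standard fact that $e^{-Q/2}$ restricted to $R(e^{Q/2})\cap D(e^{-Q/2})$ inverts $e^{Q/2}$ on $D(e^{Q/2})\cap D(e^{-Q/2})$. Everything else is routine: density of $D(e^{-Q/2})$ in $\cH_{-Q}$ is by construction, and the isometric identification of Cauchy sequences with their $\cH$-limits under $e^{-Q/2}$ is immediate from \eqref{new1}.
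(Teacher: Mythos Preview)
Your ``corrected plan'' is correct and is essentially the paper's own proof: take $h\in\cH_{-Q}$ orthogonal to every $\phi_n$, approximate it by $h_k$ in the dense core, use $\|h_k-h_j\|_{-Q}=\|e^{-Q/2}h_k-e^{-Q/2}h_j\|$ to get $e^{-Q/2}h_k\to w$ in $\cH$, compute $\langle\phi_n,h_k\rangle_{-Q}=\langle e_n, e^{-Q/2}h_k\rangle$ via the cancellation $e^{-Q/2}e^{Q/2}e_n=e_n$, and conclude $w=0$ from completeness of $\{e_n\}$. The only cosmetic difference is that the paper takes the approximants in $D(e^{-Q})$ rather than $D(e^{-Q/2})$; either choice works, and your discussion of the ``main obstacle'' (the spectral-calculus cancellation) is exactly the point the paper uses implicitly.
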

\begin{proof} In view of \eqref{new1},  
	\begin{equation}\label{AGH14b}
	\|f\|_{-Q}^2=<e^{-Q}f, f>=\|e^{-Q/2}f\|^2 \quad \mbox{for all}  \quad  f\in{D}(e^{-Q}).
	\end{equation}
Therefore, $\{{f}_n\}$ ($f_n\in{D}(e^{-Q})$) is a Cauchy sequence in $\cH_{-Q}$
	if and only if  $\{e^{-Q/2}{f_n}\}$ is a Cauchy sequence in $\cH$. 
	
	By construction, the vectors $\phi_n$ belong to  $D(e^{-Q/2})$.
	Hence, $\phi_n\in\cH_{-Q}$.  Assume that $f\in\cH_{-Q}$ is orthogonal to $\{\phi_n\}$ and consider
	a sequence $\{f_m\}$  $(f_m\in{D}(e^{-Q}))$ such that ${f}_m\to f\in\cH_{-Q} \  (\mbox{wrt.} \ \|\cdot\|_{-Q})$.
	Then,  $\{e^{-Q/2}{f_m}\}$ is a Cauchy sequence in $\cH$ and  $e^{-Q/2}{f_m}$ converges to some $g\in\cH  \  (\mbox{wrt.} \ \|\cdot\|)$. 
	This means that
	$$
	0=<f, \phi_n>_{-Q}=\lim_{m\to\infty}<{f}_m, e^{Q/2}e_n>_{-Q}=\lim_{m\to\infty}<e^{-Q/2}f_m, e_n>=<g, e_n>
	$$
	and, as a result,  $g=0$ since $\{e_n\}$ is a complete set in $\cH$. This means that  $f=0$ and the set $\{\phi_n\}$ is complete in $\cH_{-Q}$.  Completeness of  
	$\{\psi_n\}$ in $\cH_{Q}$ is established in a similar manner.	
\end{proof}

\begin{prop}\label{new11}
	Every dual GRS $\{\phi_n=e^{Q/2}e_n\}$ and $\{\psi_n=e^{-Q/2}e_n\}$  are ONB of the 
	Hilbert spaces $(\cH_{-Q}, <\cdot, \cdot>_{-Q})$  and  $(\cH_{Q}, <\cdot, \cdot>_{Q})$, respectively.
\end{prop}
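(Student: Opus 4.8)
The plan is to show that $\{\phi_n = e^{Q/2}e_n\}$ is an orthonormal sequence in $\cH_{-Q}$ which is complete, since completeness together with orthonormality characterizes an ONB in a Hilbert space. Completeness is already handed to us: Lemma \ref{new23} applies because $\{e_n\}$, being an ONB of $\cH$, is in particular a complete set, so $\{\phi_n\}$ is complete in $\cH_{-Q}$ (and dually $\{\psi_n\}$ is complete in $\cH_Q$). Thus the only thing left is the orthonormality computation.

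First I would compute the inner products directly from the defining formula \eqref{new1}. For any $m,n$, since $\phi_n = e^{Q/2}e_n \in D(e^{-Q/2})$ (by Definition \ref{d1}) and in fact $\phi_n \in D(e^{-Q})$ because $e^{-Q/2}\phi_n = e^{-Q/2}e^{Q/2}e_n = e_n \in D(e^{-Q/2})$, the expression $<\phi_m,\phi_n>_{-Q}$ is well defined and equals
\begin{equation*}
<\phi_m,\phi_n>_{-Q} = <e^{-Q/2}\phi_m, e^{-Q/2}\phi_n> = <e^{-Q/2}e^{Q/2}e_m, e^{-Q/2}e^{Q/2}e_n> = <e_m,e_n> = \delta_{mn},
\end{equation*}
using that $\{e_n\}$ is an ONB of $\cH$. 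The analogous computation for $\{\psi_n = e^{-Q/2}e_n\}$ in $\cH_Q$ gives $<\psi_m,\psi_n>_Q = <e^{Q/2}\psi_m, e^{Q/2}\psi_n> = <e_m,e_n> = \delta_{mn}$. So both sequences are orthonormal in the respective spaces, and combined with the completeness from Lemma \ref{new23} we conclude they are ONB.

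The one subtlety worth addressing carefully — and the main point requiring a word of justification — is the manipulation $e^{-Q/2}e^{Q/2}e_n = e_n$. This is the statement that $D(e^{Q/2})$ is mapped by $e^{Q/2}$ into $D(e^{-Q/2})$ precisely on the vectors where the composition returns the original vector; for a self-adjoint $Q$ with spectral measure $E(\cdot)$, one has $e^{-Q/2}e^{Q/2} f = f$ whenever $f \in D(e^{Q/2})$ and $e^{Q/2}f \in D(e^{-Q/2})$, which holds for $f = e_n$ by hypothesis. This is a routine fact from the functional calculus for unbounded self-adjoint operators (the functions $t \mapsto e^{-t/2}$ and $t \mapsto e^{t/2}$ multiply to $1$), so I would simply invoke it. I do not anticipate any genuine obstacle here; the proposition is essentially a repackaging of Lemma \ref{new23} plus a one-line spectral-calculus identity.
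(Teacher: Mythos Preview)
Your proposal is correct and follows exactly the same line as the paper's own proof: orthonormality is read off directly from the defining formula \eqref{new1} (reducing $<\phi_m,\phi_n>_{-Q}$ to $<e_m,e_n>=\delta_{mn}$), and completeness is invoked from Lemma \ref{new23}. Your extra paragraph justifying $e^{-Q/2}e^{Q/2}e_n=e_n$ via the functional calculus is a welcome bit of care that the paper leaves implicit.
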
 
\begin{proof} Due to \eqref{new1},  the sequence $\{\phi_n=e^{Q/2}e_n\}$ is orthonormal in 
	$\cH_{-Q}$. Its completeness follows from Lemma \ref{new23}. 
	The case $\{\psi_n\}$ is considered similarly with the use of \eqref{new1b}.
\end{proof}

 Dual GRS   could be used to define manifestly non self-adjoint Hamiltonians  
\begin{equation}\label{NE1}
	H_{\phi, \psi}f=\sum_{n=1}^\infty \lambda_n\left<f,\psi_n\right>\phi_n,  \qquad
	H_{\psi,\phi}g =\sum_{n=1}^\infty \lambda_n\left<g ,\phi_n\right>\psi_n
\end{equation}
with known complex eigenvalues $\{\lambda_n \}$ and eigenvectors $\{\phi_n\}$ and $\{\psi_n\}$, respectively.
We refer to \cite{BB, BIT} for the connection between $H_{\phi, \psi}$ and the adjoint of $H_{\psi, \phi}$
and for the analysis of ladder operators associated to similar  bi-orthogonal sets, 
and how these ladder operators can be used to factorize the Hamiltonians above.

\subsection{Dual GRS and $\mathcal{G}$-quasi bases}

Dual GRS  $\{\phi_n\}$ and $\{\psi_n\}$ can be considered as examples of more general object: $\mathcal{G}$-quasi bases. 
These are biorthogonal sets originally introduced in \cite{bag2013JMP}, and then analyzed in a series of papers (see \cite{bagbook_thebook} for a relatively recent review). 

\begin{defn}\label{d3} 
	Let $\mathcal{G}$ be a  dense linear manifold  in $\cH$.
	Biorthogonal sequences $\{\phi_n\}$ and $\{\psi_n\}$ are called $\mathcal{G}$-quasi bases,
	if for all   $f, g\in\mathcal{G}$, the following holds:
	\begin{equation}\label{e1}
	<f, g>=\sum_{n=1}^{\infty}<f, \phi_n><\psi_n, g>=\sum_{n=1}^{\infty}<f, \psi_n><\phi_n, g>.
	\end{equation}
\end{defn}

\begin{prop}\label{new25}
	Dual GRS  $\{\phi_n\}$ and $\{\psi_n\}$ are $\mathcal{G}$-quasi bases
	with $\mathcal{G}={D}(e^{Q/2})\cap{D}(e^{-Q/2})$.
\end{prop}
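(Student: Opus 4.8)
The plan is to fix $f,g\in\mathcal{G}=D(e^{Q/2})\cap D(e^{-Q/2})$ and reduce the identity \eqref{e1} to the Parseval identity for the orthonormal basis $\{e_n\}$ in $\cH$. The key observation is that the coefficients appearing in \eqref{e1} can be rewritten using the self-adjointness of $e^{\pm Q/2}$ and the factorizations $\phi_n=e^{Q/2}e_n$, $\psi_n=e^{-Q/2}e_n$. Explicitly, for $f\in D(e^{Q/2})$ we have $\langle f,\phi_n\rangle=\langle f,e^{Q/2}e_n\rangle=\langle e^{Q/2}f,e_n\rangle$, and for $g\in D(e^{-Q/2})$ we have $\langle\psi_n,g\rangle=\langle e^{-Q/2}e_n,g\rangle=\langle e_n,e^{-Q/2}g\rangle$. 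Hence the first sum in \eqref{e1} becomes
\[
\sum_{n=1}^{\infty}\langle f,\phi_n\rangle\langle\psi_n,g\rangle
=\sum_{n=1}^{\infty}\langle e^{Q/2}f,e_n\rangle\langle e_n,e^{-Q/2}g\rangle
=\langle e^{Q/2}f,\,e^{-Q/2}g\rangle,
\]
where the last equality is Parseval's identity in $\cH$ applied to the two vectors $e^{Q/2}f$ and $e^{-Q/2}g$, both of which lie in $\cH$ by the assumption $f,g\in\mathcal{G}$. Finally $\langle e^{Q/2}f,e^{-Q/2}g\rangle=\langle e^{-Q/2}e^{Q/2}f,g\rangle=\langle f,g\rangle$ since $e^{-Q/2}e^{Q/2}f=f$ for $f\in D(e^{Q/2})\subset D(e^{-Q/2}e^{Q/2})$; strictly one checks that $e^{Q/2}f\in D(e^{-Q/2})$, which holds because $f\in\mathcal{G}$ guarantees $e^{Q/2}f$ is in the domain once we note $e^{-Q/2}e^{Q/2}\subseteq I$ on the relevant domain via the spectral calculus for $Q$.

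The second sum is handled symmetrically: for $f\in D(e^{-Q/2})$ and $g\in D(e^{Q/2})$ one writes $\langle f,\psi_n\rangle=\langle e^{-Q/2}f,e_n\rangle$ and $\langle\phi_n,g\rangle=\langle e_n,e^{Q/2}g\rangle$, so the sum equals $\langle e^{-Q/2}f,e^{Q/2}g\rangle=\langle f,g\rangle$ by the same argument with the roles of $e^{Q/2}$ and $e^{-Q/2}$ interchanged. This establishes both halves of \eqref{e1} on all of $\mathcal{G}$, and since $\mathcal{G}$ is dense in $\cH$ (it contains the ONB $\{e_n\}$, as $e_n\in D(e^{Q/2})\cap D(e^{-Q/2})$), it qualifies as the dense linear manifold required in Definition \ref{d3}. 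Biorthogonality of $\{\phi_n\}$ and $\{\psi_n\}$ was already recorded after Definition \ref{d1}.

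The one point deserving care — the main (mild) obstacle — is the domain bookkeeping: I must be sure that all inner products are literally between vectors of $\cH$ (so that Parseval applies), and that the identity $e^{-Q/2}e^{Q/2}f=f$ is invoked only where valid. Using the spectral representation $Q=\int t\,dE_t$ and the fact that $f\in D(e^{Q/2})\cap D(e^{-Q/2})$ forces $\int (e^{t/2}+e^{-t/2})^2\,d\langle E_tf,f\rangle<\infty$, one sees $e^{Q/2}f\in D(e^{-Q/2})$ and $e^{-Q/2}e^{Q/2}f=f$, and symmetrically $e^{-Q/2}f\in D(e^{Q/2})$. With this remark in place the interchange of sum and limit is automatic since everything is an honest Parseval expansion, and no separate convergence argument is needed.
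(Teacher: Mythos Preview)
Your argument is correct and is essentially the same as the paper's: both proofs move $e^{\pm Q/2}$ across the inner product via self-adjointness, invoke the ONB expansion (Parseval) for $\{e_n\}$, use $\langle e^{Q/2}f,e^{-Q/2}g\rangle=\langle f,g\rangle$, and note density of $\mathcal{G}$ from $e_n\in\mathcal{G}$. The only cosmetic difference is that the paper expands $e^{Q/2}f$ and $e^{-Q/2}g$ in the ONB first and then takes the inner product, whereas you compute the sum directly via Parseval; you are also more explicit about the domain check $e^{Q/2}f\in D(e^{-Q/2})$, which the paper leaves implicit.
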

\begin{proof}
	If $\{\phi_n\}$ and $\{\psi_n\}$ are dual GRS, then there exists a self-adjoint operator $Q$ and ONB $\{e_n\}$ such that
	\eqref{intr1} hold.  Hence,  for all 
	$f, g\in\mathcal{G}={D}(e^{Q/2})\cap{D}(e^{-Q/2})$,
	$$
	e^{Q/2}f=\sum<e^{Q/2}f, e_n>e_n=\sum<f, \phi_n>e_n  
	$$
	and
	$$
	e^{-Q/2}g=\sum<e^{-Q/2}g, e_n>e_n=\sum<g, \psi_n>e_n.
	$$
	The last relations yield
	$$
	<f, g>=<e^{Q/2}f, e^{-Q/2}g>=\sum<f,\phi_n><\psi_n, g>.
	$$
	Similarly, $<f, g>=<e^{-Q/2}f, e^{Q/2}g>=\sum<f,\psi_n><\phi_n, g>.$  To complete the proof, it suffices
	notice that $\mathcal{G}$  is dense in $\cH$ since each vector of ONB $\{e_n\}$ belongs to ${D}(e^{Q/2})\cap{D}(e^{-Q/2})$.
\end{proof}

\begin{rem}
	Proposition \ref{new25} implies that Example 1 above of GRS provides  
	also an example of $\mathcal{G}$-quasi bases, with $\mathcal{G}=D(e^{x^2/4})$, in agreement with what was found in \cite{BB}.
\end{rem}

\subsection{Regular biorthogonal sequences and dual GRS}\label{sectIII.3}

 We say that biorthogonal sequences $\{\phi_n\}$ and  $\{\psi_n\}$ are \emph{regular} 
if $\{\phi_n\}$ and  $\{\psi_n\}$ are complete sets in $\cH$.  
In other words,  a biorthogonal sequence $\{\psi_n\}$ is defined uniquely by $\{\phi_n\}$ and vice-versa.

\begin{thm}\label{new31}
	Regular biorthogonal sequences $\{\phi_n\}$ and  $\{\psi_n\}$ are dual GRS.
\end{thm}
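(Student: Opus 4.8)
The plan is to recover the self-adjoint operator $Q$ from the positive operator that carries $\{\phi_n\}$ onto $\{\psi_n\}$. First I would define $G_0$ on $D(G_0)=\mbox{span}\{\phi_n\}$ by $G_0\phi_n=\psi_n$, extended by linearity. This is unambiguous because biorthogonality forces $\{\phi_n\}$ to be linearly independent, and $D(G_0)$ is dense in $\cH$ precisely because $\{\phi_n\}$ is complete. Using $<\phi_n,\psi_m>=\delta_{nm}$ one reads off at once that $<G_0 f, f>=\sum_n|c_n|^2$ for $f=\sum_n c_n\phi_n\in D(G_0)$, and that $<G_0 f, g>=<f,G_0 g>$; hence $G_0$ is a densely defined, symmetric, positive operator.

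Next I would pass to the Friedrichs extension $G$ of $G_0$ (one of the extremal positive self-adjoint extensions of $G_0$). It is positive and self-adjoint, and its form domain $D(G^{1/2})$ is the completion of $D(G_0)$ in the graph norm $f\mapsto(\|f\|^2+\|G^{1/2}f\|^2)^{1/2}$ of $G^{1/2}$; in particular $D(G_0)\subset D(G^{1/2})$ and $D(G_0)$ is a core for $G^{1/2}$. The \emph{regularity} hypothesis enters here to make $G$ injective: if $Gf=0$ then $<f,\psi_n>=<f,G_0\phi_n>=<Gf,\phi_n>=0$ for every $n$, and completeness of $\{\psi_n\}$ forces $f=0$. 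Consequently $G^{-1}$ is a (possibly unbounded) positive self-adjoint operator, so $Q:=-\ln G$ is a well-defined self-adjoint operator with $e^{Q/2}=G^{-1/2}$ and $e^{-Q/2}=G^{1/2}$.

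With this $Q$ I would set $e_n:=G^{1/2}\phi_n$, which is legitimate since $\phi_n\in D(G_0)\subset D(G^{1/2})$. Because $\phi_n\in D(G_0)\subset D(G)$ and $G\phi_n=\psi_n$, we get $<e_n,e_m>=<G^{1/2}\phi_n,G^{1/2}\phi_m>=<G\phi_n,\phi_m>=<\psi_n,\phi_m>=\delta_{nm}$, so $\{e_n\}$ is orthonormal. For completeness I would combine two facts: $R(G^{1/2})$ is dense in $\cH$ (its orthogonal complement is $\ker G^{1/2}=\{0\}$ by injectivity of $G$), and $D(G_0)$ is a core for $G^{1/2}$; hence any $y\in\cH$ is first approximated by some $G^{1/2}x$ with $x\in D(G^{1/2})$ and then, by the core property, by $G^{1/2}x'$ with $x'\in D(G_0)$, i.e.\ by an element of $\mbox{span}\{e_n\}$. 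Thus $\{e_n\}$ is an ONB. It remains to check $e_n\in D(e^{\pm Q/2})$ together with the two defining identities: $e_n=G^{1/2}\phi_n\in R(G^{1/2})=D(G^{-1/2})=D(e^{Q/2})$ and $e^{Q/2}e_n=G^{-1/2}G^{1/2}\phi_n=\phi_n$ by injectivity of $G^{1/2}$; likewise $e^{-Q/2}e_n=G^{1/2}G^{1/2}\phi_n=G\phi_n=\psi_n$. Hence $\{\phi_n=e^{Q/2}e_n\}$ and $\{\psi_n=e^{-Q/2}e_n\}$ are dual GRS.

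The routine parts are the symmetry and positivity of $G_0$ and the domain bookkeeping with $G^{1/2}$ and $G^{-1/2}$. The two points that need genuine care are: (i) the choice of the Friedrichs extension, since it is exactly this extension whose form domain is the graph-norm closure of $D(G_0)$, which is what makes $\{e_n\}$ complete — an arbitrary positive self-adjoint extension of $G_0$ need not have $\mbox{span}\{\phi_n\}$ as a core for its square root; and (ii) the injectivity of $G$, where the completeness of the \emph{second} sequence $\{\psi_n\}$ (not merely of $\{\phi_n\}$) is used, and without which neither $\ln G$ nor the recovery $e^{Q/2}e_n=\phi_n$ would be meaningful.
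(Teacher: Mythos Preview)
Your proof is correct and follows essentially the same approach as the paper: define $G_0$ on $\mbox{span}\{\phi_n\}$, take its Friedrichs extension $G=e^{-Q}$, use completeness of $\{\psi_n\}$ for injectivity, and set $e_n=G^{1/2}\phi_n$. The only cosmetic difference is in the completeness argument for $\{e_n\}$: the paper passes through the auxiliary Hilbert space $\cH_{-Q}$ and shows that a vector $\gamma\perp\{e_n\}$ corresponds to an element of $\cH_{-Q}$ orthogonal to the dense set $\mbox{span}\{\phi_n\}$, whereas you phrase the same fact directly as ``$D(G_0)$ is a core for $G^{1/2}$ and $R(G^{1/2})$ is dense'' --- but this is exactly the defining property of the Friedrichs extension that the paper invokes, so the two arguments are the same in substance.
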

\begin{proof}
	Let $\{\phi_n\}$ and $\{\psi_n\}$ be regular biorthogonal sequences. 
	Then an operator $G_0$ defined initially as 
	\begin{equation}\label{new14}
	G_0\phi_n=\psi_n,  \qquad  n\in\mathbb{N}   
	\end{equation}
	and extended on $D(G_0)=span\{\phi_n\}$  by the linearity is 
	densely defined and positive.  The later follows from the fact that
	$$ 
	<G_0f, f>=\sum_{n=1}^k\sum_{m=1}^kc_n\overline{c}_m<\psi_n, \phi_m>=\sum_{n=1}^k{|c_n|^2} 
	$$ 
	for all  $f=\sum_{n=1}^k{c_n}\phi_n\in{D(G_0)}$.
	
	Let $G$ be the Friedrichs extension of $G_0$. The operator $G$ is positive. Indeed, assuming that $Gg=0$ for 
	 $g\in{D}(G)$ we obtain
	$0=<Gg, \phi_n>=<g, G\phi_n>=<g, \psi_n>$.  Therefore, $g=0$ since $\{\psi_n\}$ is a complete set in $\cH$.
	The positivity of $G$ allows one to state that $G=e^{-Q}$,  where $Q$ is a self-adjoint operator in $\cH$.
	Denote
	$e_n=e^{-Q/2}\phi_n$.  Due to \eqref{new14}, $e_n=e^{Q/2}\psi_n$. Therefore,  
	$e_n\in{D}(e^{Q/2})\cap{D}(e^{-Q/2})$    and
	\begin{equation}\label{new15}
	\phi_n=e^{Q/2}e_n,  \qquad  \psi_n=e^{-Q/2}e_n. 
	\end{equation}
	
	The sequence $\{e_n\}$ is orthonormal in $\cH$ since  
	$$
	<e_n, e_m>=<e^{-Q/2}\phi_n,  e^{Q/2}\psi_m>=<\phi_n, \psi_m>=\delta_{nm}.
	$$
	
	Let us assume that $\gamma\in\cH$ is orthogonal to  $\{e_n\}$. 
	Then there exists a sequence $\{f_m\}$  $(f_m\in{D}(e^{-Q}))$ such that  $e^{-Q/2}{f_m}\to\gamma$ in $\cH$
	(because $e^{-Q/2}D(e^{-Q})$ is a dense set in $\cH$).
	In this case, due to \eqref{AGH14b},  $\{f_m\}$ is a Cauchy sequence in $\cH_{-Q}$ and therefore,  ${f}_m$ tends to some $f\in\cH_{-Q}$. 
This means that
    \begin{equation}\label{neww4}\fl
	0=<\gamma,  e_n>=\lim_{m\to\infty}<e^{-Q/2}{f_m}, e_n>=\lim_{m\to\infty}<f_m, \phi_n>_{-Q}=<f, \phi_n>_{-Q}.
	\end{equation}
	We note that the set $D(G_0)=span\{\phi_n\}$ is dense in the Hilbert space $(\cH_{-Q}, <\cdot,\cdot>_{-Q})$  (since $G=e^{-Q}$ is the Friedrichs extension  of 
	$G_0$ \cite{AK_Arlin}). In view of \eqref{neww4}, $f=0$ that means
$\lim_{m\to\infty}\|f_m\|_{-Q}=\lim_{m\to\infty}\|e^{-Q/2}f_m\|=0$ and therefore,  $\gamma=0$.
		 This means that the orthonormal sequence  $\{e_n\}$ is complete in $\cH$. Hence $\{e_n\}$ is an ONB.
\end{proof}

\begin{rem}
Another proof of Theorem \ref{new31} can be found in \cite[Theorem 2.1]{Inoue1}.	
\end{rem}

\subsection{The uniqueness of $Q$ and $\{e_n\}$ for regular biorthogonal sequences}\label{subsection2.2}

Let $\{\phi_n\}$ be a basis in $\cH$.  Then $\{\phi_n\}$ is a regular sequence because
its biorthogonal sequence $\{\psi_n\}$ has to be a basis.  By Theorem \ref{new31}, 
$\{\phi_n\}$ is a GRS, i.e., there exists a self-adjoint operator $Q$ and an ONB $\{e_n\}$ such that  $\phi_n=e^{Q/2}e_n$. 

\begin{prop}\label{lem1}
The operator $Q$ and  ONB $\{e_n\}$ in Definition \ref{d1} are determined \emph{uniquely} for every basis $\{\phi_n\}$.
\end{prop}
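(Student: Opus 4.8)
The plan is to show that if $\{\phi_n\}$ is a basis, then any pair $(Q, \{e_n\})$ satisfying $\phi_n = e^{Q/2}e_n$ with $\{e_n\}$ an ONB is forced to coincide with the one built in the proof of Theorem \ref{new31}. The key observation is that when $\{\phi_n\}$ is a basis, its biorthogonal sequence $\{\psi_n\}$ is uniquely determined and is also a basis; hence the operator $G_0$ of \eqref{new14}, sending $\phi_n \mapsto \psi_n$, is determined by $\{\phi_n\}$ alone. First I would argue that \emph{any} admissible $Q$ must satisfy $e^{-Q}\phi_n = \psi_n$ on $\mathrm{span}\{\phi_n\}$, i.e. $e^{-Q} \supseteq G_0$: indeed $e^{-Q}\phi_n = e^{-Q/2}(e^{-Q/2}\phi_n) = e^{-Q/2}e_n = \psi_n$, using $\phi_n = e^{Q/2}e_n$ and the fact (proved via biorthogonality exactly as in Theorem \ref{new31}) that $\psi_n = e^{-Q/2}e_n$. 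So every admissible $G := e^{-Q}$ is a positive self-adjoint extension of $G_0$.

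The heart of the argument is then uniqueness of the self-adjoint extension, and here is where the basis hypothesis does real work. I would show that when $\{\phi_n\}$ is a basis, $G_0$ is \emph{essentially self-adjoint}, so it has exactly one positive self-adjoint extension, namely its closure, which must then equal the Friedrichs extension $G = e^{-Q}$ appearing in Theorem \ref{new31}. To see essential self-adjointness: the adjoint $G_0^*$ is computed on the biorthogonal side — $G_0^* g = \sum_n \langle g, \psi_n\rangle \phi_n$ type considerations — and the basis property forces $\overline{G_0} = G_0^*$. Concretely, since $\{\phi_n\}$ and $\{\psi_n\}$ are both bases with the same biorthogonal relation, the symmetric operator $G_0$ has deficiency indices $(0,0)$: any $g \in D(G_0^*)$ with $G_0^* g = \pm i g$ would give $\langle g, \phi_n\rangle_{\text{related to }\psi_n} = 0$ forcing $g = 0$ by completeness. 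Once $Q$ is shown unique, uniqueness of $\{e_n\}$ is immediate: $e_n = e^{-Q/2}\phi_n$ is then determined.

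The main obstacle I anticipate is making the essential-self-adjointness step rigorous without circularity — one must be careful that $G_0$ is genuinely symmetric (not merely positive) and that the deficiency-space computation really uses that $\{\phi_n\}$ is a \emph{basis} and not just a complete biorthogonal system (for which, by the later discussion in Subsection \ref{subsection2.2}, uniqueness \emph{fails}). The cleanest route is probably: (i) observe $G_0$ symmetric and positive on $D(G_0) = \mathrm{span}\{\phi_n\}$; (ii) show $R(G_0 + I)$ is dense, equivalently that $D(G_0)$ is a \emph{core} for the Friedrichs extension, which follows because $\{\phi_n\}$ being a basis makes $\mathrm{span}\{\phi_n\}$ dense in $\cH_{-Q}$ in the graph-norm-adequate sense; (iii) conclude $G_0$ has a unique self-adjoint extension. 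An alternative, perhaps slicker, is to use Proposition \ref{new11}: since $\{\phi_n\}$ is an ONB of $\cH_{-Q}$ for \emph{every} admissible $Q$, the map $T: \cH_{-Q_1} \to \cH_{-Q_2}$ sending $\phi_n \mapsto \phi_n$ is unitary and fixes a basis, hence the two inner products agree on $\mathrm{span}\{\phi_n\}$; since that span is dense in both completions and $\{\phi_n\}$ is a basis of $\cH$ itself, one deduces $e^{-Q_1} = e^{-Q_2}$ as quadratic forms, hence as operators, giving $Q_1 = Q_2$. I would pursue this second approach as the main line and fall back on the deficiency-index computation only if the density/core bookkeeping gets delicate.
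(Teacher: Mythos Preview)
Your reduction is exactly the paper's: any admissible $e^{-Q}$ is a positive self-adjoint extension of the operator $G_0$ of \eqref{new14}, so uniqueness of $Q$ is equivalent to essential self-adjointness of $G_0$, i.e.\ to density of $R(G_0+I)$. You also correctly isolate the obstacle: this step must use that $\{\phi_n\}$ is a basis, not merely complete. Where your proposal falls short is in the mechanism. In your route~(ii) you appeal to ``$\mathrm{span}\{\phi_n\}$ dense in $\cH_{-Q}$'', but for the Friedrichs extension this is automatic (it is essentially the definition) and uses nothing about bases; it gives a \emph{form} core, whereas $R(G_0+I)$ dense requires an \emph{operator} core. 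Your deficiency-index sketch likewise does not close: from $\gamma\perp R(G_0+I)$ one obtains only $\langle\gamma,\phi_n\rangle=-\langle\gamma,\psi_n\rangle$, and completeness alone cannot force $\gamma=0$ --- this is precisely why non-basis regular sequences admit several $Q$. The quadratic-form alternative has the same gap: you would need $\mathrm{span}\{\phi_n\}$ to be a form core for \emph{every} admissible $e^{-Q}$, and density in $\cH_{-Q}$ (the $\|\cdot\|_{-Q}$-norm) is weaker than density in the form norm $\|\cdot\|^2+\|\cdot\|^2_{-Q}$.

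The paper closes the gap with a short positivity computation that uses both basis expansions simultaneously. For $\gamma\perp R(G_0+I)$ one has $\langle\gamma,\phi_n\rangle=-\langle\gamma,\psi_n\rangle$, and since $\{\phi_n\}$ and $\{\psi_n\}$ are both bases,
\[
\gamma=\sum_n\langle\gamma,\psi_n\rangle\phi_n=\sum_n\langle\gamma,\phi_n\rangle\psi_n=-\sum_n\langle\gamma,\psi_n\rangle\psi_n .
\]
With $\gamma_m=\sum_{n\le m}\langle\gamma,\psi_n\rangle\phi_n$, the first equality gives $\gamma_m\to\gamma$ and the last gives $G_0\gamma_m\to-\gamma$, so $\langle G_0\gamma_m,\gamma_m\rangle\to-\|\gamma\|^2$; but $\langle G_0\gamma_m,\gamma_m\rangle=\sum_{n\le m}|\langle\gamma,\psi_n\rangle|^2\ge0$, whence $\gamma=0$. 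Your form-core alternative can be repaired by the same dual-expansion device: for $f\in D(e^{-Q/2})$ the partial sums $f_m=\sum_{n\le m}\langle f,\psi_n\rangle\phi_n$ converge to $f$ both in $\cH$ (because $\{\phi_n\}$ is a basis) and in $\cH_{-Q}$ (because $\{\phi_n\}$ is an ONB there with the same coefficients), hence in the form norm --- but this is the step your sketch did not supply.
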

\begin{proof}
Let $\gamma$ be orthogonal to $\mathcal{R}(G_0+I)$. Then, in view of \eqref{new14},
$<\gamma, \phi_n>=-<\gamma, \psi_n>$  and the basis property of $\{\phi_n\}$ and $\{\psi_n\}$ leads to the relation:
\begin{equation}\label{KKK3}
\gamma=\sum<\gamma, \psi_n>\phi_n=\sum<\gamma, \phi_n>\psi_n=-\sum<\gamma, \psi_n>\psi_n.
\end{equation}

By virtue of \eqref{KKK3}, the sequence $\gamma_m=\sum_{n=1}^m<\gamma, \psi_n>\phi_n$ tends to $\gamma$, while
$G_0\gamma_m=\sum_{n=1}^m<\gamma, \psi_n>\psi_n$ tends to $-\gamma$. Therefore, 
$$
-\|\gamma\|^2=\lim_{m\to\infty}<G_0\gamma_m, \gamma_m>=\lim_{m\to\infty}\sum_{n=1}^m|<\gamma, \psi_n>|^2=\sum_{n=1}^\infty|<\gamma, \psi_n>|^2
$$ 
that is possible when $\gamma=0$. Hence, $\mathcal{R}(G_0+I)$ is a dense set in $\cH$ and, as a result, $G_0$ is an essentially self-adjoint operator 
in $\cH$. Its closure $\overline{G}_0$ gives   a unique positive self-adjoint extension $G$  which determines a unique self-adjoint operator  $Q=-\ln G$
(i.e. $G=e^{-Q}$). Moreover, because of the equality $e_n=e^{-Q/2}\phi_n$, the ONB $\{e_n\}$  is also determined uniquely. 
\end{proof}

 In view of Proposition \ref{lem1} a natural question arise: \emph{is the operator $Q$ determined uniquely for a  given GRS $\{\phi_n\}$?} 

The choice of the Friedrichs extension $G=e^{-Q}$ of $G_0$ in the proof of Theorem \ref{new31}  was inspired by the fact 
that the sequence $\{\phi_n\}$ must be complete in the Hilbert space $\cH_{-Q}$ 
(that, in view of \eqref{neww4} and Lemma \ref{new23},  is equivalent to the completeness of orthonormal system $\{e_n\}$ in $\cH$). 
Generally, there are many self-adjoint extensions $G$ of $G_0$ which preserve this property and each of them 
can be used instead of the Friedrichs extension. 

We recall \cite{AK_Arlin}  that a nonnegative self-adjoint extension  $G$ of
$G_0$ is called extremal if
$$
\inf_{\phi\in{D}(G_0)}{<G(f-\phi),(f-\phi)>}=0 \quad \mbox{for all} \quad f\in{D}(G).
$$
The Friedrichs extension and the Krein-von Neumann extension of $G_0$ are examples of extremal extensions. 

In the case of regular bi-orthogonal sequences $\{\phi_n\}$ and $\{\psi_n\}$,  the symmetric operator $G_0$ is positive and each nonnegative self-adjoint extension  $G$ of
$G_0$  must also be \emph{positive}. Indeed, if $<Gf,f>=0$ for some $f\in{D}(G)$, then $Gf=0$ and
$$
 0=<Gf, \phi_n>=<f, G\phi_n>=<f, G_0\phi_n>=<f, \psi_n>
$$
that implies $f=0$. Therefore, each nonnegative self-adjoint extension $G$ of $G_0$ is positive and it has the form $G=e^{-Q}$. This means that
 $<G(f-\phi), (f-\phi)>$ coincides with  $\|f-\phi\|^2_{-Q}$ due to  \eqref{AGH14b}.  For this reason, 
 the definition of extremal extensions can be rewritten as follows:  let $G_0$ be determined by \eqref{new14}, 
 where $\{\phi_n\}$ and $\{\psi_n\}$ are  regular biorthogonal sequences. 
A self-adjoint extension $G=e^{-Q}$ of $G_0$ is called \emph{extremal} if 
$$
\inf_{\phi\in{D}(G_0)}\|f-\phi\|^2_{-Q}=0 \quad \mbox{for all} \quad f\in{D}(e^{-Q}).
$$
Therefore, the extremality of a self-adjoint extension $e^{-Q}$ of $G_0$ is equivalent
to the completeness of  the sequence $\{\phi_n\}$ in  $\cH_{-Q}$. This means that for each
extremal self-adjoint extension  $e^{-Q}$ one can repeat  the proof of Theorem \ref{new31}
 and establish the relations  \eqref{new15}, where $\{e_n\}$ is an ONB. 
Summing up, we prove the equivalence of statements $(i)$ and $(ii)$  in:
\begin{prop}\label{KKK5}
Let $\{\phi_n\}$ and $\{\psi_n\}$ be a regular biorthogonal sequences. The following are equivalent:
\begin{enumerate}
\item[(i)] the self-adjoint operator $Q$ and the ONB  $\{e_n\}$  are determined uniquely in 
\eqref{new15};
\item[(ii)] the symmetric operator $G_0$ in \eqref{new14} has a unique extremal extension $G=e^{-Q}$.
\item[(iii)] 
\begin{equation}\label{KKK6}
\inf_{\phi\in{D}(G_0)}\frac{<G_0\phi,\phi>}{|<\phi,g>|^2}=0,  \quad \mbox{for all nonzero} \quad g\in\ker(I+G_0^*),
\end{equation}
where $G_0^*$ means the adjoint operator of $G_0$  with respect to $<\cdot,\cdot>$.
\end{enumerate}
\end{prop}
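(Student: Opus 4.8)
Since the equivalence of (i) and (ii) is obtained in the discussion preceding the statement, it remains to prove $(ii)\Leftrightarrow(iii)$. Put $\mathfrak{N}:=\ker(I+G_0^*)=\mathcal{R}(I+G_0)^\perp$. If $\mathfrak{N}=\{0\}$, then $\mathcal{R}(I+G_0)$ is dense and $G_0$ is essentially self-adjoint (this is exactly the argument used in the proof of Proposition \ref{lem1}), so $G_0$ has a unique positive, hence a unique extremal, self-adjoint extension, namely $\overline{G}_0$; thus (ii) holds, while (iii) holds vacuously. So from now on I may assume $\mathfrak{N}\neq\{0\}$.

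The one elementary ingredient is the identity $\langle\phi,g\rangle=-\langle\phi,G_0^*g\rangle=-\langle G_0\phi,g\rangle$, valid for every $g\in\mathfrak{N}$ and $\phi\in D(G_0)$. Note also that for every extremal extension $e^{-Q}$ of $G_0$ one has $\|\phi\|_{-Q}^2=\langle e^{-Q}\phi,\phi\rangle=\langle G_0\phi,\phi\rangle$ on $D(G_0)$, because $e^{-Q}$ restricts to $G_0$ there; hence the quotient occurring in (iii) is independent of the choice of $Q$, and (iii) asserts precisely that no nonzero $g\in\mathfrak{N}$ makes the linear functional $\phi\mapsto\langle\phi,g\rangle$ bounded on $D(G_0)$ with respect to the seminorm $\langle G_0\cdot,\cdot\rangle^{1/2}$ --- equivalently, with respect to $\|\cdot\|_{-Q_F}$, where $G_F=e^{-Q_F}$ is the Friedrichs extension.

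To translate (ii), I would use that every nonnegative self-adjoint extension $G$ of $G_0$ is positive and satisfies $G_K\leq G\leq G_F$, where $G_K$ is the Krein-von Neumann extension; since both $G_F$ and $G_K$ are extremal, the extremal extension of $G_0$ is unique if and only if $G_K=G_F$. I then invoke the Birman-Krein-Vishik description of the nonnegative self-adjoint extensions of $G_0$ in the form given in \cite{AK_Arlin}: they are parametrized by pairs $(\mathfrak{W},T)$, where $\mathfrak{W}$ is a closed subspace of the ``form-admissible part'' $\mathfrak{N}_f$ of $\mathfrak{N}$ --- the set of those $g\in\mathfrak{N}$ for which $\phi\mapsto\langle\phi,g\rangle$ is bounded on $D(G_0)$ in the seminorm $\langle G_0\cdot,\cdot\rangle^{1/2}$ --- and $T$ is a nonnegative self-adjoint operator in $\mathfrak{W}$, with $G_F$ corresponding to $\mathfrak{W}=\{0\}$ and $G_K$ to $\mathfrak{W}=\mathfrak{N}_f$. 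Hence $G_K=G_F$ if and only if $\mathfrak{N}_f=\{0\}$, and by the identity above the statement $\mathfrak{N}_f=\{0\}$ is exactly (iii). This closes the loop $(ii)\Leftrightarrow G_K=G_F\Leftrightarrow\mathfrak{N}_f=\{0\}\Leftrightarrow(iii)$.

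The delicate point is the handling of the form domains of $G_F$ and $G_K$ in the regime where $G_0$ is positive but not uniformly positive, so that $(D(G_0),\langle G_0\cdot,\cdot\rangle^{1/2})$ is genuinely incomplete, its completion being $\cH_{-Q_F}$ (cf. the proof of Theorem \ref{new31}). One must verify carefully that a deficiency vector $g\in\mathfrak{N}$ that is form-bounded against $D(G_0)$ really does give rise to a bona fide extremal self-adjoint extension of $G_0$ different from $G_F$, and that conversely any second extremal extension must arise this way; this is exactly what the results of \cite{AK_Arlin} on extremal extensions provide, while the rest of the argument rests only on the identity $\langle\phi,g\rangle=-\langle G_0\phi,g\rangle$ and the sandwich $G_K\leq G\leq G_F$.
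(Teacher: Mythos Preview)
Your proof is correct and follows essentially the same route as the paper: both note that $(i)\Leftrightarrow(ii)$ is the content of the discussion preceding the proposition, and both reduce $(ii)\Leftrightarrow(iii)$ to the equivalence ``unique extremal extension $\Leftrightarrow G_F=G_K \Leftrightarrow$ condition \eqref{KKK6}''. The only difference is in the external reference invoked for the last step: the paper cites Krein's classical criterion \cite[Theorem~9]{AK_Krein} directly for $G_F=G_K\Leftrightarrow\eqref{KKK6}$, whereas you unpack this via the Birman--Krein--Vishik parametrization from \cite{AK_Arlin}, identifying \eqref{KKK6} with the vanishing of the form-admissible part $\mathfrak{N}_f$ of the deficiency subspace. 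Your route is a bit more explanatory (and the side-remark $\langle\phi,g\rangle=-\langle G_0\phi,g\rangle$, while correct, is not actually needed for the argument), but the logical skeleton and the reliance on the cited extension-theoretic results are the same.
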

\begin{proof} It suffices to establish the equivalence $(ii)$ and $(iii)$. 
Indeed, the set of extremal extensions involves the Friedrichs $G_F$ and the Krein-von Neumann $G_K$ extensions of $G_0$  and 
it contains only one element when $G_F=G_K$ \cite{AK_Arlin}. The last equality is equivalent to  \eqref{KKK6} due to \cite[Theorem 9]{AK_Krein}. 
 \end{proof}

\section{Orthonormal sequences in Krein space}\label{sec4}	

\subsection{Elements of the Krein spaces theory}\label{sec4a}

Here all necessary results of Krein spaces theory are presented in a form convenient for our exposition.
The chapters \cite[Chap. 6]{bagbook_thebook} and \cite[Chap. 8]{BE}
are recommended as complementary reading on the subject.

An operator $J$ is called \emph{fundamental symmetry} in a Hilbert space $\cH$ if
$J$ is a bounded self-adjoint operator in $\cH$  and  $J^2=I$.

Let $J$ be a non-trivial fundamental symmetry, i.e., $J\not={\pm{I}}$. The Hilbert space
$(\cH, <\cdot, \cdot>)$ equipped with the indefinite inner product
$[\cdot, \cdot]:=<J\cdot, \cdot>$  
is called a Krein space $(\cH, [\cdot,\cdot])$.

The principal difference between the initial inner product $<\cdot,\cdot>$ and the indefinite inner product
$[\cdot,\cdot]$ is that there exist nonzero elements $f\in\cH$ such that 
$[f,f]<0$. An element $f\not=0$ is called  \emph{positive} or \emph{negative} if
 $[f,f]>0$ or $[f,f]<0$, respectively.
A  closed subspace $\mathfrak{L}$ of the Hilbert space $(\cH, <\cdot,\cdot>)$  is called \emph{positive} or {\em negative} 
if all nonzero elements $f\in\mathfrak{L}$ are, respectively, positive or negative.
A positive (negative) subspace $\mathfrak{L}$ is called
\emph{uniformly positive} (\emph{uniformly negative}) if there exists $\alpha>0$ such that
$$
[f,f]\geq\alpha<f,f> \qquad  (-[f,f]\geq\alpha<f,f>) \quad \forall{f}\in\mathfrak{L}.
$$

In each of these classes we can define maximal subspaces.
For instance, a positive subspace $\mathfrak{L}$ is called \emph{maximal positive} if $\mathfrak{L}$
is not a subspace of another positive subspace in $\cH$.  
The maximality of a (negative, uniformly positive, uniformly negative) closed subspace
is defined similarly.

Let a subspace ${\mathfrak L}$ be a maximal positive (negative). Then its
orthogonal complement with respect to the indefinite inner product $[\cdot, \cdot]$ 
$$
{\mathfrak L}^{[\bot]}=\{f\in\cH \ :\ [f,g]=0,\ \forall{g}\in
{\mathfrak L}\}
$$
is a maximal negative (positive) subspace and the $J$-orthogonal  sum
\begin{equation}\label{e8}
{\mathfrak L}[\dot{+}]{\mathfrak L}^{[\bot]}
\end{equation}
is dense in the Hilbert space $(\cH, <\cdot, \cdot>)$
(the symbol $[\dot{+}]$ in \eqref{e8} indicates that the subspaces ${\mathfrak L}$ and
${\mathfrak L}^{[\bot]}$  are orthogonal with respect to $[\cdot,\cdot]$, i.e. $J$-orthogonal).

The $J$-orthogonal  sum \eqref{e8} coincides with $\cH$, i.e.,
\begin{equation}\label{e9}
\cH={\mathfrak L}[\dot{+}]{\mathfrak L}^{[\bot]}
\end{equation}
 if and only if ${\mathfrak L}$ is  a maximal uniformly positive (uniformly negative) subspace 
(in this case,  ${\mathfrak L}^{[\bot]}$ is maximal uniformly negative (uniformly positive)).

\begin{rem}
 The decomposition \eqref{e9} is called \emph{the fundamental decomposition} of $\cH$ and it 
is often used for (an equivalent) definition of Krein spaces. Precisely, let $\cH$  be a complex linear space with a
Hermitian sesquilinear form $[\cdot,\cdot]$  
(i.e. a mapping $[\cdot,\cdot]:{\cH}\times{\cH}\to\mathbb{C}$ such that  $
[\alpha_1f_1+\alpha_2f_2,g]=\alpha_1[f_1,g]+\alpha_2[f_2,g]$ and 
$[f,g]=\overline{[g,f]}$ for all $f_1, f_2, f, g\in{\cH}$, $\alpha_1, \alpha_2\in\mathbb{C}$). 
Then $({\cH}, [\cdot,\cdot])$ is called a Krein space if $\cH$ admits a decomposition
(\ref{e9}) such that the linear manifolds $({\mathfrak L}, [\cdot, \cdot])$ and
 $({\mathfrak L}^{[\bot]}, -[\cdot, \cdot])$ are Hilbert spaces (here we suppose for definiteness that
 ${\mathfrak L}$ is positive).
\end{rem}

 Each fundamental decomposition \eqref{e9} is uniquely determined by a bounded operator 
 $\cC$ which coincides with the identity operator on the positive subspace ${\mathfrak L}_+:={\mathfrak L}$
and with the minus identity operator on the negative subspace ${\mathfrak L}_-:={\mathfrak L}^{[\bot]}$. 
By the construction, 
${\mathfrak L}_\pm=(I\pm\cC)\cH$  and  $\cC^2=I$. Moreover,
the operator $J\cC$ is positive self-adjoint since
$$
<J\cC{f}, f>=[\cC{f}, f]=[f_+, f_+]-[f_-, f_-]>0 \quad \mbox{for non-zero} \quad f=f_++f_-, \ f_\pm\in{\mathfrak L}_\pm.
$$
Hence, $J\cC=e^{-Q}$, where $Q$ is a bounded self-adjoint operator. The relations $\cC^2=I, J\cC=e^{-Q}>0$ 
 and  \cite[Theorem 2.1]{KS} imply that 
\begin{equation}\label{new34}
	JQ=-QJ.
\end{equation}

Similar reasonings applied to the $J$-orthogonal sum \eqref{e8} gives rise to the collection of unbounded operators
$\cC=Je^{-Q}=e^Q{J}$, where unbounded $Q$ anticommutes with $J$. The subspaces in \eqref{e8}
are recovered as ${\mathfrak L}_\pm=(I\pm\cC)({\mathfrak L}_+[\dot{+}]{\mathfrak L}_-)$. 

Summing up: \emph{the fundamental decompositions  \eqref{e9} of a Krein space are in one-to-one correspondence
with the set of bounded operators $\cC=Je^{-Q}=e^Q{J}$.}

\emph{The $J$-orthogonal sums \eqref{e8} of maximal positive/maximal negative subspaces are in one-to-one correspondence
with the set of unbounded operators $\cC=Je^{-Q}=e^Q{J}$.  In both cases,  $Q$ anticommutes with $J$.}

The operator $\cC$ is called \emph{a $\cC$-symmetry operator} and this notion is widely used 
in ${\mathcal P}{\mathcal T}$-symmetric approach in Quantum Mechanics \cite{BE}.

\begin{rem} If  $Q$ in \eqref{new34} is unbounded, then  we understood  \eqref{new34} 
as the identity $JQf=-QJf$, where $f\in{D(Q)}$ and
$J$ leaves $D(Q)$ invariant. From now on, we will adopt this simplifying notation.
\end{rem}

A $\cC$-symmetry operator allows one to define a new inner product via the indefinite inner product $[\cdot, \cdot]$:
\begin{equation}\label{AGHNEW}\fl
<f, g>_{-Q}:=[\cC{f}, g]=<J^2e^{-Q}f,g>=<e^{-Q}f,g>, \qquad f,g\in{D}(\cC)=D(e^{-Q}).
\end{equation}
The corresponding norm $\|\cdot\|_{-Q}$ is equivalent to the original norm of $\cH$ when $\cC$  is bounded.
If $\cC$ is unbounded, then the completion of ${D}(\cC)$ with respect to $\|\cdot\|_{-Q}$
leads to the Hilbert space $(\cH_{-Q}, <\cdot, \cdot>_{-Q})$ defined in Section \ref{sec2}.

\subsection{$J$-orthonormal sequences of the first and of the second type}\label{sec3.2}	 
A sequence $\{\phi_n\}$ is called orthonormal in a Krein space $(\cH, [\cdot,\cdot])$ (briefly,  $J$-orthonormal)
if  $|[\phi_n, \phi_m]|=\delta_{nm}$.  For each  $J$-orthonormal sequence $\{\phi_n\}$ there exists a biorthogonal one
 \begin{equation}\label{new4}
	\psi_n=[\phi_n,\phi_n]J\phi_n.
	\end{equation}
Obviously, $\{\psi_n\}$ is $J$-orthonormal and $[\phi_n, \phi_n]=[\psi_n, \psi_n]$.
In view of \eqref{new4}, the positive symmetric operator $G_0$ in \eqref{new14} acts as
\begin{equation}\label{GGG}
G_0\phi_n=[\phi_n,\phi_n]J\phi_n,  \qquad  n\in\mathbb{N}.
\end{equation}
 
 In what follows we assume that $\{\phi_n\}$ is complete in the Hilbert space $\cH$. 
 Then $\{\psi_n\}$ in \eqref{new4} is complete too. 
Therefore, $\{\phi_n\}$  and  $\{\psi_n\}$ are regular biorthogonal sequences and,
by Theorem \ref{new31},  they are dual GRS. Thus, \emph{each complete $J$-orthonormal sequence is a GRS.}
The corresponding operator  $Q=-\ln \ G$ in \eqref{new15} can be determined by every extremal extension $G=e^{-Q}$ of $G_0$. 
Such kind of freedom allows us to select an appropriative operator $Q$  which fits well with
 the $J$-orthonormality of $\{\phi_n\}$. 
 
\begin{thm}\label{new2}
	Let  $\{\phi_n\}$  be a complete $J$-orthonormal sequence. If a  self-adjoint operator $Q$ 
	in \eqref{new15} is determined uniquely, then the relation \eqref{new34} holds.
\end{thm}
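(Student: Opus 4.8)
The plan is to transfer the desired anticommutation into an identity about the positive symmetric operator $G_0$ of \eqref{GGG} and its canonical extension. By Theorem~\ref{new31} we have $G=e^{-Q}$ for a self-adjoint extension $G$ of $G_0$, and, by the hypothesis together with Proposition~\ref{KKK5}, this $G$ is the \emph{unique} extremal extension of $G_0$; since the excerpt's discussion gives that a symmetric operator with a unique extremal extension has $G_F=G_K$ and hence only one nonnegative self-adjoint extension, $G$ is in fact the only nonnegative self-adjoint extension of $G_0$. I will prove $JGJ=G^{-1}$. Granting this, and using that $J$ is a unitary involution so that $JGJ=Je^{-Q}J=e^{-JQJ}$ by functional calculus, we get $e^{-JQJ}=e^{Q}$; applying $\ln$ to both (self-adjoint) sides yields $-JQJ=Q$, i.e. $JQJ=-Q$, equivalently $JQ=-QJ$, which is \eqref{new34} (the equality $JQJ=-Q$ automatically encodes the $J$-invariance of $D(Q)$ mentioned in the Remark preceding the theorem).

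The first step is to record the elementary identities coming from $J$-orthonormality. Writing $\varepsilon_n:=[\phi_n,\phi_n]\in\{\pm1\}$, relation \eqref{new4} reads $\psi_n=\varepsilon_nJ\phi_n$, whence $J\psi_n=\varepsilon_n\phi_n$, and \eqref{GGG} reads $G_0\phi_n=\psi_n$. A short computation then shows that both $JGJ$ and $G^{-1}$ are nonnegative self-adjoint extensions of the operator $G_0^{-1}$ acting as $\psi_n\mapsto\phi_n$ on the dense manifold $\mbox{span}\{\psi_n\}$: indeed $J\psi_n=\varepsilon_n\phi_n\in D(G_0)\subset D(G)$ gives $\psi_n\in D(JGJ)$ with $JGJ\psi_n=\varepsilon_nJG_0\phi_n=\varepsilon_nJ\psi_n=\phi_n$, while $G\phi_n=\psi_n$ gives $\psi_n\in R(G)=D(G^{-1})$ with $G^{-1}\psi_n=\phi_n$; positivity and self-adjointness of $JGJ$ and of $G^{-1}$ are inherited from $G$, using that $J$ is unitary.

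The crux is then to show that $G_0^{-1}$ has a \emph{unique} nonnegative self-adjoint extension, for then the two extensions above coincide and $JGJ=G^{-1}$. Here I would exploit the self-duality of the $J$-orthonormal picture: $\{\psi_n\}$ is again a complete $J$-orthonormal sequence (a one-line check using $[J\phi_n,J\phi_m]=\overline{[\phi_m,\phi_n]}$), its biorthogonal sequence is $\{\phi_n\}$, and the operator attached to $\{\psi_n\}$ by \eqref{GGG} is exactly $G_0^{-1}$. Moreover, if $(Q,\{e_n\})$ is the unique data representing $\{\phi_n\}$ in \eqref{new15}, then $(-Q,\{e_n\})$ is the unique such data for $\{\psi_n\}$: any representation $\psi_n=e^{Q''/2}e_n''$ has biorthogonal sequence $\{e^{-Q''/2}e_n''\}$, which by regularity must be $\{\phi_n\}$, so $(-Q'',\{e_n''\})$ represents $\{\phi_n\}$, and uniqueness for $\{\phi_n\}$ forces $Q''=-Q$, $e_n''=e_n$. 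Applying Proposition~\ref{KKK5} to the pair $(\{\psi_n\},\{\phi_n\})$ now yields that $G_0^{-1}$ has a unique extremal extension, hence (as above) a unique nonnegative self-adjoint extension. (Alternatively, this last point follows from the classical inversion duality $(G_0)_F^{-1}=(G_0^{-1})_K$, $(G_0)_K^{-1}=(G_0^{-1})_F$ between Friedrichs and Krein--von Neumann extensions, combined with Proposition~\ref{KKK5}.)

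I expect the genuine content to lie in that third paragraph — converting the uniqueness of $Q$ for $\{\phi_n\}$ into uniqueness of the nonnegative self-adjoint extension of $G_0^{-1}$ — which is why I route it either through the self-duality of $J$-orthonormal sequences plus Proposition~\ref{KKK5}, or equivalently through the inversion duality of extremal extensions. The remaining ingredients — the domain bookkeeping for $JGJ\supset G_0^{-1}$ and $G^{-1}\supset G_0^{-1}$, and the spectral-calculus identities $Je^{-Q}J=e^{-JQJ}$ and injectivity of the exponential on self-adjoint operators — are routine.
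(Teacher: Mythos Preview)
Your argument is correct and genuinely different from the paper's. The paper also starts from Proposition~\ref{KKK5} to obtain $G_F=G_K$, but then outsources the passage from $G_F=G_K$ to \eqref{new34} to two external results of Kuzhel--Sudilovskaja: \cite[Theorem~4.3]{KS} gives $Je^{-Q}f=e^{Q}Jf$ on $D(e^{-Q})$ from the coincidence of the Friedrichs and Krein--von Neumann extensions of $G_0$ (using the positive/negative decomposition ${\mathfrak L}_+^0[\dot{+}]{\mathfrak L}_-^0$ of $D(\overline{G}_0)$), and \cite[Theorem~2.1]{KS} converts this into $JQ=-QJ$. Your route is more self-contained: you exploit directly the algebraic symmetry $JG_0J=G_0^{-1}$ encoded in \eqref{GGG}--\eqref{new4}, lift it to the self-adjoint level by showing that $JGJ$ and $G^{-1}$ are both nonnegative self-adjoint extensions of $G_0^{-1}$, and force their equality by transferring the uniqueness hypothesis from $\{\phi_n\}$ to $\{\psi_n\}$ (either via the self-duality argument you give, or via the inversion duality $(G_0)_F^{-1}=(G_0^{-1})_K$, $(G_0)_K^{-1}=(G_0^{-1})_F$). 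What your approach buys is transparency --- the anticommutation is seen to come from the manifest symmetry of the $J$-orthonormal picture under $\phi_n\leftrightarrow\psi_n$, $G_0\leftrightarrow G_0^{-1}$, $Q\leftrightarrow -Q$ --- at the price of being slightly longer; the paper's proof is terser but requires the reader to consult \cite{KS} for the substantive step.
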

\begin{proof} 
Separating the sequence  $\{\phi_n\}$ by the signs of $[\phi_n,\phi_n]$:
\begin{equation}\label{bebe95}
\phi_{n}=\left\{\begin{array}{l}
\phi_{n}^+ \quad \mbox{if} \quad [\phi_{n},\phi_{n}]=1, \\
\phi_{n}^- \quad \mbox{if} \quad [\phi_{n}, \phi_{n}]=-1
\end{array}\right.
\end{equation}
we obtain two sequences of positive $\{\phi_n^+\}$ and negative $\{\phi_n^-\}$  elements.
Denote by  ${\mathfrak L}_+^0$ and ${\mathfrak L}_-^0$  the closure (in the Hilbert space $\cH$) of
the linear spans generated by the sets $\{\phi_n^+\}$ and $\{\phi_n^-\}$, respectively. 
By construction, ${\mathfrak L}_\pm^0$ are  positive/negative subspaces and their  $J$-orthogonal sum 
${\mathfrak L}_+^0[\dot{+}]{\mathfrak L}_-^0$ coincides with the domain of the closure $\overline{G}_0$ of $G_0$ determined
by \eqref{GGG} on $span\{\phi_n\}$ \cite[Lemma 4.1]{KS}\footnote{in \cite{KS}, the notation $G_0$ is used for $\overline{G}_0$}.  

By Proposition \ref{KKK5}, the uniqueness of $Q$  means that
the symmetric operator $\overline{G}_0$ has a unique extremal extension $G=e^{-Q}$.
This is possible when $G$ coincides with the Friedrichs extension of $G_0$ as well as with the Krein-von Neumann extension of $G_0$.  
This fact, by virtue of \cite[Theorem 4.3]{KS}, means that 
$Je^{-Q}f=e^QJf$  for  $f\in{D}(e^{-Q}).$  The last relation and  \cite[Theorem 2.1]{KS}  justify \eqref{new34}.  
\end{proof}
 
In view of Theorem \ref{new2},  it seems natural  to consider the anti-commutation relation \eqref{new34} in the  case
where $Q$ is not determined uniquely. Taking into account that \eqref{new34} is equivalent 
to the relation 
\begin{equation}\label{KKK9}
JGf=G^{-1}Jf, \qquad f\in{D}(G), 
\end{equation}
where $G=e^{-Q}$ is an extremal extension of $G_0$, we reduce the choice of $Q$ which satisfies  \eqref{new34}  to the choice
of an extremal extension $G$ satisfying \eqref{KKK9}. 

If extremal extensions $G$ of $G_0$ are not determined uniquely, then
 not each $Q=-\ln{G}$ will  anticommute necessarily with $J$.
 In particular, the operator $Q$ that corresponds to the Friedrichs extension $G=e^{-Q}$ of $G_0$ does not satisfy \eqref{new34}
  \cite{KKS}.   

\begin{defn}\label{newdef}
A complete $J$-orthonormal sequence $\{\phi_n\}$ is of the first type if 
there exists a  self-adjoint operator $Q$ such that the formulas
 \eqref{new15} hold with the additional property $JQ=-QJ$.
 Otherwise,  $\{\phi_n\}$ is of the second type.
\end{defn}

In view of Theorem \ref{new2}, each complete $J$-orthonormal 
sequence $\{\phi_n\}$ with the unique operator $Q$ in \eqref{new15} is the first type.
In particular, every $J$-orthonormal basis is a first type sequence.  The example of 
a second type sequence can be found in  \cite[Subsection 6.2]{KKS}.
In what follows, considering a first type sequence, we assume that $Q$  anti-commutes with $J$.

\begin{prop}\label{new2b}
Let a complete sequence  $\{\phi_n\}$ be a GRS. The following are equivalent: 
\begin{enumerate}
\item[(i)]  $\{\phi_n\}$ is the first type; 
\item[(ii)]   the operator $Q$ in \eqref{new15} can be chosen in such a way that $JQ=-QJ$ and the vectors $e_n$ are eigenvectors of $J$
(i.e., $Je_n=e_n$ or $Je_n=-e_n$).
\end{enumerate}
\end{prop}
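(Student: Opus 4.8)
The plan is to prove the two implications separately; the implication $(ii)\Rightarrow(i)$ is immediate, and all the content lies in $(i)\Rightarrow(ii)$. For $(ii)\Rightarrow(i)$: if $Q$ can be chosen so that $JQ=-QJ$ and every $e_n$ is an eigenvector of $J$, then in particular $Q$ satisfies \eqref{new15} together with $JQ=-QJ$, which is exactly what Definition \ref{newdef} requires for $\{\phi_n\}$ to be of the first type; there is nothing further to check.

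For $(i)\Rightarrow(ii)$ I would fix, using the first-type hypothesis, a self-adjoint $Q$ realizing \eqref{new15} with $JQ=-QJ$, and show that this \emph{same} $Q$ together with its ONB $\{e_n\}$ already enjoys the extra property in $(ii)$, namely $Je_n=\pm e_n$. First I would pass the anticommutation relation to the exponentials: since $J$ is a bounded involution with $JQ=-QJ$ (understood in the usual unbounded sense, $J$ leaving $D(Q)$ invariant and intertwining $Q$ with $-Q$ there), the spectral theorem gives $Jf(Q)J=f(-Q)$ for every Borel $f$; in particular $Je^{Q/2}=e^{-Q/2}J$ and $J$ maps $D(e^{Q/2})\cap D(e^{-Q/2})$ onto itself, so $Je_n\in D(e^{-Q/2})$ and all expressions below are meaningful. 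Next I would use uniqueness of the biorthogonal sequence: $\{\phi_n\}$ is complete in $\cH$, so its biorthogonal sequence is uniquely determined, and hence the dual GRS $\{\psi_n=e^{-Q/2}e_n\}$ coincides with the canonical biorthogonal sequence $\psi_n=[\phi_n,\phi_n]J\phi_n$ of \eqref{new4}. Equating the two descriptions of $\psi_n$,
\[
e^{-Q/2}e_n=[\phi_n,\phi_n]\,J\phi_n=[\phi_n,\phi_n]\,Je^{Q/2}e_n=[\phi_n,\phi_n]\,e^{-Q/2}Je_n,
\]
and cancelling the injective operator $e^{-Q/2}$ gives $e_n=[\phi_n,\phi_n]Je_n$; since $[\phi_n,\phi_n]=\pm1$ this is exactly $Je_n=[\phi_n,\phi_n]e_n$, so $e_n$ is an eigenvector of $J$ with eigenvalue $[\phi_n,\phi_n]$, which establishes $(ii)$.

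The argument is short, and the only delicate point is the domain bookkeeping in the unbounded case: one must make sure that the relation $JQ=-QJ$ genuinely transfers to $e^{\pm Q/2}$ and that $Je_n$ lands in $D(e^{-Q/2})$, so that the cancellation of $e^{-Q/2}$ in the display is legitimate. This is precisely what the spectral-theorem step secures, and it is essentially the only obstacle; the rest is formal. One may also note that the sign of the $J$-eigenvalue of $e_n$ is forced to equal $[\phi_n,\phi_n]$, in agreement with the sign splitting used in the proof of Theorem \ref{new2}.
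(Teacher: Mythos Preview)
Your argument for $(i)\Rightarrow(ii)$ is correct and is essentially the paper's own computation, with the added merit that you spell out the domain bookkeeping and the uniqueness of the biorthogonal sequence explicitly; the paper writes the chain $J\phi_n=Je^{Q/2}e_n=e^{-Q/2}Je_n=[\phi_n,\phi_n]\psi_n=[\phi_n,\phi_n]e^{-Q/2}e_n$ and cancels $e^{-Q/2}$, which is the same idea in slightly different order.

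However, your $(ii)\Rightarrow(i)$ has a genuine gap. The proposition is stated for an arbitrary complete GRS $\{\phi_n\}$, not for a $J$-orthonormal one, whereas Definition~\ref{newdef} of ``first type'' applies only to complete \emph{$J$-orthonormal} sequences. So from the hypotheses in $(ii)$ you must first verify that $\{\phi_n\}$ is $J$-orthonormal; only then does the anticommutation $JQ=-QJ$ place $\{\phi_n\}$ in the first type. You assert ``there is nothing further to check,'' but there is: the paper's proof of $(ii)\to(i)$ consists precisely of this missing verification,
\[
[\phi_n,\phi_m]=\langle Je^{Q/2}e_n, e^{Q/2}e_m\rangle=\langle e^{-Q/2}Je_n, e^{Q/2}e_m\rangle=[e_n,e_n]\,\delta_{nm},
\]
using $Je^{Q/2}=e^{-Q/2}J$ and $Je_n=\pm e_n$. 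This is short, but it is not vacuous and must be included.
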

\begin{proof} 
$(i)\to(ii)$.  
By virtue of \eqref{new15} and \eqref{new4},
$$
J\phi_n=Je^{Q/2}e_n=e^{-Q/2}Je_n=[\phi_n, \phi_n]\psi_n=[\phi_n, \phi_n]e^{-Q/2}e_n.
$$
Comparing the third and the fifth terms in the equality above we get $Je_n=[\phi_n, \phi_n]e_n$ that implies $(ii)$.

$(ii)\to(i)$.  Since $\{\phi_n\}$ is complete in $\Hil$ by assumption, it suffices  to verify the $J$-orthonormality of $\{\phi_n\}$:  
$[\phi_n, \phi_m]=<Je^{Q/2}e_n, e^{Q/2}e_m>=<e^{-Q/2}Je_n, e^{Q/2}e_m>=[e_n, e_n]\delta_{nm}.$
\end{proof}

\begin{rem}
The studies of the first type sequences began in \cite{KKS}, where they were called ``quasi bases''. 
Proposition \ref{new2b} is a part of \cite[Theorem 6.3]{KKS}. We present here a simpler proof.
\end{rem}

For the first type sequence,  the  inner product in  $(\cH_{-Q}, <\cdot, \cdot>_{-Q})$ 
is \emph{directly determined by the known indefinite inner product}
$[\cdot, \cdot]$, see \eqref{AGH25} below.
 Let us briefly explain this important fact  (see \cite{KKS} for details).

Since  $Q$  anticommutes with $J$, the $J$-orthogonal sum
${\mathfrak L}_+^0[\dot{+}]{\mathfrak L}_-^0$ of the subspaces ${\mathfrak L}_\pm^0$ defined in the proof
of Theorem \ref{new2} can be extended to the $J$-orthogonal sum
$$
D(G)=D(e^{-Q})={\mathfrak L}_+[\dot{+}]{\mathfrak L}_-, \qquad  {\mathfrak L}_\pm^0\subset{\mathfrak L}_\pm,  
$$
where ${\mathfrak L}_\pm$ are maximal positive/negative subspaces in the Krein space $(\cH, [\cdot,\cdot])$ and 
they are uniquely determined by the choice of $Q$:  ${\mathfrak L}_\pm=(I\pm{Je^{-Q}})D(e^{-Q})$.  The last relation
and \eqref{new1} imply that for $f=(I+Je^{-Q})u$ and  $g=(I+Je^{-Q})v$ from ${\mathfrak L}_+$:
\begin{eqnarray*}\fl
[f, g]=<Jf, g>=<J(I+Je^{-Q})u, (I+Je^{-Q})v>=2([u, v]+<e^{-Q}u, v>) & = & \\
<e^{-Q}(I+Je^{-Q})u, (I+Je^{-Q})v>=<e^{-Q}f, g>=<f,g>_{-Q}.
\end{eqnarray*}
Therefore, the indefinite inner product $[\cdot,\cdot]$ 
coincides with $<\cdot,\cdot>_{-Q}$ on ${\mathfrak L}_+$. Similar calculations show that $[\cdot,\cdot]$ coincides with   
$-<\cdot,\cdot>_{-Q}$ on ${\mathfrak L}_-$.  Moreover,  the subspaces ${\mathfrak L}_\pm$ are orthogonal with respect 
to $<\cdot,\cdot>_{-Q}$ since
$$
<f, \gamma>_{-Q}=<e^{-Q}(I+Je^{-Q})u, (I-Je^{-Q})w>=0, 
$$
where  $\gamma=(I-Je^{-Q})w$ and  $u, w\in{D(e^{-Q})}$. This leads to the conclusion that
\begin{equation}\label{AGH14}
\cH_{-Q}=\widehat{{\mathfrak L}}_+[\oplus_{-Q}]\widehat{\mathfrak L}_-,
\end{equation}
where $\widehat{{\mathfrak L}}_\pm$ are the completion of the pre-Hilbert spaces $({{\mathfrak L}}_\pm, \pm[\cdot,\cdot])$
and $[\oplus_{-Q}]$ indicates the orthogonality with respect to $<\cdot,\cdot>_{-Q}$ and with respect to $[\cdot, \cdot]$. 
Keeping the same notation  for the extension of $[\cdot,\cdot]$ onto $\cH_{-Q}$ we obtain
the new Krein space $(\cH_{-Q}, [\cdot,\cdot])$ with the  fundamental decomposition \eqref{AGH14}.
For every $f,g\in\cH_{-Q}$ \ ($f_\pm, g_\pm\in\widehat{{\mathfrak L}}_\pm$),
\begin{equation}\label{AGH25}  
<f, g>_{-Q}=[f_+,g_+] - [f_-,g_-].    
\end{equation}

For the second type sequences,  there are no operators $Q$ in \eqref{new15} 
which anticommute with $J$. The space $\cH_{-Q}$ cannot be presented as in \eqref{AGH14}. 
This implies that $<\cdot,\cdot>_{-Q}$ cannot be directly expressed via  $[\cdot,\cdot]$
and one should apply  much more efforts for calculation of $<\cdot,\cdot>_{-Q}$.

Let $\{\phi_n\}$ be the first type.  Then the linear manifold $\mathcal{G}={D}(e^{Q/2})\cap{D}(e^{-Q/2})$  in 
Proposition \ref{new25} is invariant with respect to $J$ and the formula \eqref{e1} can be rewritten as
$$
[f, g]=\sum_{n=1}^{\infty}\delta_n[f, \phi_n][\phi_n, g]=\sum_{n=1}^{\infty}\delta_n[f, \psi_n][\psi_n, g],  \qquad   f,g\in\mathcal{G},
$$
where $\delta_n=[\phi_n,\phi_n]=[\psi_n,\psi_n]$.
Moreover, for all $f\in\cH_{-Q}$,
\begin{equation}\label{AGH41}
f=\sum_{n=1}^\infty\delta_n[f, \phi_n]\phi_n,
\end{equation}
where the series converges in $(\cH_{-Q}, <\cdot,\cdot>_{-Q})$.  Indeed, 
$f=\sum_{n=1}^\infty<f, \phi_n>_{-Q}\phi_n$ since
 $\{\phi_n\}$ is  ONB of $\cH_{-Q}$ (Proposition \ref{new11}). By virtue of \eqref{bebe95}, \eqref{AGH14} and \eqref{AGH25},
$$
<f, \phi_n>_{-Q}=\left\{\begin{array}{c}
[f, \phi_n^+] \quad (\mbox{if} \ \phi_n=\phi_n^+); \\
 -[f, \phi_n^-]  \quad (\mbox{if} \ \phi_n=\phi_n^-)
 \end{array}\right. =\delta_n[f, \phi_n]
$$
 that implies \eqref{AGH41}. 

Assume that  $\sum_{n=1}^\infty{c_n}\phi_n$  converges to an element $f$ in $(\cH, <\cdot,\cdot>)$. 
In this case, due to \eqref{new4}, $c_n=<f,\psi_n>=\delta_n[f,\phi_n].$
In general, we cannot state that this series  converges  unconditionally in $\cH$.
Denote
$$
\mathcal{D}_{un}=\{f\in\cH :  \ \mbox{the series} \ \sum_{n=1}^\infty\delta_n[f,\phi_n]\phi_n  \ \mbox{converges unconditionally to}  \ f \ \mbox{in} \ \cH\}.
$$

\begin{prop}\label{fr75}
Let $\{\phi_n\}$ be a first type sequence. Then 
$\mathcal{D}_{un}\subseteq\mathfrak{L}_{+}^0\dot{+}\mathfrak{L}_{-}^0$,
where $\mathfrak{L}_{\pm}^0$ are defined in the proof of Theorem \ref{new2}.
\end{prop}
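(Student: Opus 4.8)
The plan is to take an arbitrary $f\in\mathcal{D}_{un}$ and use only the defining property: the series $\sum_{n=1}^\infty\delta_n[f,\phi_n]\phi_n$ converges unconditionally to $f$ in $(\cH,<\cdot,\cdot>)$. First I would invoke the classical fact that an unconditionally convergent series in a Banach space has the subseries property, i.e. for every $A\subseteq\mathbb{N}$ the subseries $\sum_{n\in A}\delta_n[f,\phi_n]\phi_n$ converges in $\cH$, and if $\mathbb{N}=A\sqcup A'$ then the two sums add up to $f$. Applying this to the partition $A_\pm=\{n:\phi_n=\phi_n^\pm\}$ determined by the sign splitting \eqref{bebe95}, I obtain vectors $f_\pm=\sum_{n\in A_\pm}\delta_n[f,\phi_n]\phi_n\in\cH$ with $f=f_++f_-$.

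Next I would locate $f_\pm$. Every partial sum of $\sum_{n\in A_+}\delta_n[f,\phi_n]\phi_n$ is a finite linear combination of the vectors $\phi_n^+$, hence belongs to $\mathrm{span}\{\phi_n^+\}$; since this subseries converges in $\cH$, its limit $f_+$ lies in the $\cH$-closure $\mathfrak{L}_+^0$ of that span. In the same way $f_-\in\mathfrak{L}_-^0$. Finally, $\mathfrak{L}_+^0$ is a positive and $\mathfrak{L}_-^0$ a negative subspace of the Krein space $(\cH,[\cdot,\cdot])$, so $\mathfrak{L}_+^0\cap\mathfrak{L}_-^0=\{0\}$ and the decomposition $f=f_++f_-$ exhibits $f$ as an element of the direct sum $\mathfrak{L}_+^0\dot{+}\mathfrak{L}_-^0$; this is precisely the asserted inclusion.

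I do not expect a serious obstacle: the single non-trivial ingredient is the implication \emph{unconditional convergence $\Rightarrow$ subseries convergence}, which is standard Banach-space theory and, if one prefers to stay self-contained, can be replaced by the elementary observation that all rearrangements of $\sum\delta_n[f,\phi_n]\phi_n$ converge to the common value $f$, from which the convergence of the grouped series over $A_+$ and over $A_-$ follows. I would also remark that the first-type hypothesis is not actually consumed by this argument — it only fixes the ambient objects $\mathfrak{L}_\pm^0$ appearing in the statement (via the proof of Theorem \ref{new2}) — so the inclusion in fact holds for every complete $J$-orthonormal sequence; if a genuine use of the first-type property were wanted, the natural addition would be to combine this with the fundamental decomposition \eqref{AGH14} of $\cH_{-Q}$, but that refinement is not needed for Proposition \ref{fr75} as stated.
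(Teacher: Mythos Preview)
Your proof is correct and follows essentially the same route as the paper's: both split the unconditionally convergent series via the sign partition \eqref{bebe95}, use the subseries property of unconditional convergence (the paper cites \cite[Theorem~3.10]{Heil} for this, while you state it directly), and observe that the two limits lie in $\mathfrak{L}_+^0$ and $\mathfrak{L}_-^0$ respectively. Your version is in fact slightly more explicit than the paper's in justifying $\mathfrak{L}_+^0\cap\mathfrak{L}_-^0=\{0\}$ and in noting that the first-type hypothesis plays no role in the argument itself.
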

\begin{proof}
Let $f\in\mathcal{D}_{un}$. Then, simultaneously with \eqref{AGH41}, the series 
$$
\sum_{n}^\infty[f, \phi_n^+]\phi_n^+ \qquad  \mbox{and} \qquad  -\sum_{n}^\infty[f, \phi_n^-]\phi_n^- 
$$
(the vectors $\phi_n^\pm$ are determined by \eqref{bebe95}) converge to elements $f_\pm$ in the Hilbert space
$\cH$ (see, e.g., \cite[Theorem 3.10]{Heil}).  By the construction $f_\pm\in\mathfrak{L}_\pm^0$. Therefore,
$f=f_++f_-$ belongs to  $\mathfrak{L}_{+}^0\dot{+}\mathfrak{L}_{-}^0$.
\end{proof}   

\subsection{$J$-orthonormal sequences and operators of $\cC$-symmetry}\label{sec3.3}

We say that an $J$-orthonormal sequence $\{\phi_n\}$  \emph{generates a  $\cC$-symmetry operator
  $\cC=Je^{-Q}=e^{Q}J$} (the operator $Q$ anti-commutes with $J$) if  
$$
\cC\phi_n^+=\phi_n^+, \qquad  \cC\phi_n^-=-\phi_n^-,
$$
where $\phi_n^\pm$ are defined in \eqref{bebe95}.

With each operator $\cC$ one can associate a Hilbert space  $(\cH_{-Q}, <\cdot, \cdot>_{-Q})$ (see Subsection \ref{sec4a}).
In view of \eqref{AGHNEW},
$$
<\phi_n, \phi_m>_{-Q}=[\cC{\phi_n}, \phi_m]=\left\{\begin{array}{c}
[\phi_n^+, \phi_m] \quad (\mbox{if} \ \phi_n=\phi_n^+) \\
 -[\phi_n^-, \phi_m]  \quad (\mbox{if} \ \phi_n=\phi_n^-)
 \end{array}\right. =\delta_{nm}.
$$
Therefore, $\{\phi_n\}$ is an orthonormal system in  $\cH_{-Q}$.

Proposition \ref{NEE} and Corollary \ref{NEE1} follow from \cite[Sections 5, 6]{KKS}.
\begin{prop}\label{NEE}
Each complete $J$-orthonormal sequence $\{\phi_n\}$ generates at least one operator of $\cC$-symmetry. 
The sequence $\{\phi_n\}$ is the first type if and only if it generates  a $\cC$-symmetry operator  $\cC=e^{Q}J$
such that $\{\phi_n\}$ is an ONB of $\cH_{-Q}$. This operator $\cC$ is determined uniquely or 
there are infinitely many such operators.  
\end{prop}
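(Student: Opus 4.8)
The plan is to break the statement into three assertions and handle them in turn, leaning heavily on the machinery already in place: Theorem \ref{new31} (every complete $J$-orthonormal sequence is a dual GRS), Proposition \ref{new2b} (the first-type criterion in terms of $JQ=-QJ$ together with $Je_n=[\phi_n,\phi_n]e_n$), Proposition \ref{new11} (dual GRS are ONB of $\cH_{-Q}$), and the correspondence between fundamental decompositions / $J$-orthogonal sums and $\cC$-symmetry operators recalled in Subsection \ref{sec4a}. First I would establish \emph{existence} of at least one generated $\cC$-symmetry operator. Given the complete $J$-orthonormal $\{\phi_n\}$, separate it into $\{\phi_n^+\}$ and $\{\phi_n^-\}$ by the sign of $[\phi_n,\phi_n]$ as in \eqref{bebe95}, and let ${\mathfrak L}_\pm^0$ be the closures of their spans as in the proof of Theorem \ref{new2}. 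These are positive/negative subspaces whose $J$-orthogonal sum is $D(\overline{G}_0)$; extend them to a maximal positive ${\mathfrak L}_+$ and maximal negative ${\mathfrak L}_-$ with ${\mathfrak L}_\pm^0\subseteq{\mathfrak L}_\pm$ so that ${\mathfrak L}_+[\dot{+}]{\mathfrak L}_-$ is a $J$-orthogonal sum of the form \eqref{e8}; by Subsection \ref{sec4a} this corresponds to a (possibly unbounded) $\cC=Je^{-Q}=e^QJ$ with $Q$ anticommuting with $J$, and by construction $\cC\phi_n^\pm=\pm\phi_n^\pm$. So a generated $\cC$ always exists.

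For the equivalence, the $(\Leftarrow)$ direction is immediate: if some generated $\cC=e^QJ$ makes $\{\phi_n\}$ an ONB of $\cH_{-Q}$, then in particular $\{\phi_n\}=\{e^{Q/2}e_n\}$ for the ONB $\{e_n:=e^{-Q/2}\phi_n\}$ of $\cH$ (using $\|f\|_{-Q}=\|e^{-Q/2}f\|$ from \eqref{AGH14b}), so $\{\phi_n\}$ is a GRS with $JQ=-QJ$, hence first type by Definition \ref{newdef}. For $(\Rightarrow)$, suppose $\{\phi_n\}$ is first type; then by Proposition \ref{new2b} there is $Q$ with $JQ=-QJ$, $\phi_n=e^{Q/2}e_n$, $\psi_n=e^{-Q/2}e_n$, and $Je_n=[\phi_n,\phi_n]e_n$. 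Set $\cC=e^QJ$. One checks this $\cC$ is \emph{generated} by $\{\phi_n\}$: for $\phi_n=\phi_n^+$ we have $Je_n=e_n$, so $\cC\phi_n=e^QJe^{Q/2}e_n=e^Qe^{-Q/2}Je_n=e^{Q/2}e_n=\phi_n$, and similarly $\cC\phi_n^-=-\phi_n^-$. That $\{\phi_n\}$ is an ONB of $\cH_{-Q}$ is then exactly Proposition \ref{new11}, since $\{\phi_n\}$ and $\{\psi_n\}$ are dual GRS for this $Q$. The last sentence — uniqueness or infinitude — I would get from Proposition \ref{KKK5}: the admissible $Q$ with $JQ=-QJ$ correspond to extremal extensions $G=e^{-Q}$ of $G_0$ satisfying \eqref{KKK9}, and the set of extremal extensions of a positive symmetric operator is either a single point (when $G_F=G_K$) or an infinite family; intersecting with the affine-type constraint \eqref{KKK9} one argues the same dichotomy persists, citing \cite[Sections 5, 6]{KKS} for the detailed count.

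The main obstacle I anticipate is the \emph{existence of a generated $\cC$ for an arbitrary complete $J$-orthonormal sequence}, i.e. the very first assertion, because a priori one only knows ${\mathfrak L}_\pm^0$ are (not necessarily maximal, not necessarily uniformly definite) positive/negative subspaces, and one must invoke the Krein-space fact that any positive subspace extends to a maximal positive one with maximal negative $J$-orthogonal complement — and then verify that the resulting $\cC$ genuinely restricts to $\pm I$ on the original $\phi_n^\pm$, which requires ${\mathfrak L}_+^0\subseteq{\mathfrak L}_+$ and ${\mathfrak L}_-^0\subseteq{\mathfrak L}_-$ simultaneously, i.e. that the chosen maximal extension of ${\mathfrak L}_+^0$ is $J$-orthogonal to ${\mathfrak L}_-^0$. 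This is where the regularity/completeness hypothesis and \cite[Lemma 4.1]{KS} do the real work, identifying ${\mathfrak L}_+^0[\dot{+}]{\mathfrak L}_-^0$ with $D(\overline{G}_0)$ and letting one build ${\mathfrak L}_\pm$ from an extremal extension $G$ of $G_0$; the rest is bookkeeping with $e^{Q/2}$ that I would not belabor.
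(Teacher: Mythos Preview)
The paper itself gives no argument here: the entire proof reads ``Proposition \ref{NEE} and Corollary \ref{NEE1} follow from \cite[Sections 5, 6]{KKS}.'' So there is nothing to compare against beyond the citation, and your sketch --- which ultimately also leans on \cite{KKS} for the hardest step --- is in that sense aligned with the paper, while supplying the connective tissue the paper omits. Your treatment of the equivalence is correct and clean: the $(\Rightarrow)$ direction is exactly Proposition \ref{new2b} plus Proposition \ref{new11}, and for $(\Leftarrow)$ the map $e^{-Q/2}$ extends to a unitary from $\cH_{-Q}$ onto $\cH$, so an ONB $\{\phi_n\}$ of $\cH_{-Q}$ is carried to an ONB $\{e_n\}$ of $\cH$, giving the first-type GRS representation with the anticommuting $Q$.

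There is, however, one genuine conflation in your existence argument. In the final paragraph you propose to ``build ${\mathfrak L}_\pm$ from an extremal extension $G$ of $G_0$''. This works only when the resulting $Q=-\ln G$ anticommutes with $J$, i.e.\ only for first-type sequences. For a \emph{second-type} sequence, by Definition \ref{newdef} no extremal extension $e^{-Q}$ of $G_0$ satisfies \eqref{KKK9}, so none of them produces a $\cC$-symmetry operator; yet Proposition \ref{NEE} asserts that a generated $\cC$ still exists. The $Q$ appearing in that $\cC=e^QJ$ is therefore \emph{not} the $Q$ of any GRS representation \eqref{new15} of $\{\phi_n\}$, and you cannot reach it through the extremal-extension machinery of Subsection \ref{subsection2.2}. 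The correct route is the one you state first and then abandon: a purely Krein-space argument extending the dual pair $({\mathfrak L}_+^0,{\mathfrak L}_-^0)$ to a maximal dual pair $({\mathfrak L}_+,{\mathfrak L}_-)$, which is precisely the content of \cite[Sections 5, 6]{KKS}. Keep your initial description, drop the extremal-extension detour for existence, and your outline is sound.
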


Obviously, if $\{\phi_n\}$ is the first type, then its bi-orthogonal sequence $\{\psi_n\}$
is also the first type. 

 \begin{cor}\label{NEE1}
If $\{\phi_n\}$ is the first type and $\{\lambda_n\}$ are real numbers, then
there exists a $\cC$-symmetry operator  $\cC=e^{Q}J$ such that the operators 
$H_{\phi,\psi}$ and $H_{\psi,\phi}$ defined in \eqref{NE1} on the domains
 $D(H_{\phi,\psi})=\mbox{span}\{\phi_n\}$ and $D(H_{\psi,\phi})=\mbox{span}\{\psi_n\}$, respectively are essentially self-adjoint in the Hilbert spaces 
$\cH_{-Q}$ and $\cH_{Q}$.  The spectra of  $H_{\phi,\psi}$ and $H_{\psi,\phi}$ coincides with the closure of $\{\lambda_n\}$.
\end{cor}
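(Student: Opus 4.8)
The plan is to derive the statement from Proposition~\ref{NEE} together with the elementary spectral theory of diagonal operators. First I would invoke Proposition~\ref{NEE}: since $\{\phi_n\}$ is of the first type, it generates a $\cC$-symmetry operator $\cC=e^{Q}J$ (with $JQ=-QJ$) for which $\{\phi_n\}$ is an orthonormal basis of $(\cH_{-Q},<\cdot,\cdot>_{-Q})$. By Proposition~\ref{new11} applied to the same $Q$, the biorthogonal sequence $\{\psi_n=e^{-Q/2}e_n\}$ is then an ONB of $(\cH_{Q},<\cdot,\cdot>_{Q})$. This is the operator $\cC$ I would claim does the job.

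Next I would compute the action of the Hamiltonians on their stated domains. Evaluating \eqref{NE1} at $\phi_m$ and using the biorthogonality $<\phi_m,\psi_n>=\delta_{mn}$ of the dual GRS with respect to the original inner product of $\cH$, one finds $H_{\phi,\psi}\phi_m=\lambda_m\phi_m$; in particular the restriction of $H_{\phi,\psi}$ to $\mbox{span}\{\phi_n\}$ is perfectly well defined, no convergence of the defining series being involved. Thus on $\mbox{span}\{\phi_n\}$ the operator $H_{\phi,\psi}$ coincides with the diagonal operator attached to the ONB $\{\phi_n\}$ of $\cH_{-Q}$ and the real numbers $\{\lambda_n\}$, and since the $\lambda_n$ are real it is symmetric in $\cH_{-Q}$. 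The same computation gives $H_{\psi,\phi}\psi_m=\lambda_m\psi_m$, so $H_{\psi,\phi}$ on $\mbox{span}\{\psi_n\}$ is the diagonal operator attached to the ONB $\{\psi_n\}$ of $\cH_{Q}$ with the same eigenvalues.

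Finally I would appeal to the standard fact that a densely defined diagonal operator $Au_n=\mu_n u_n$ with real $\mu_n$, defined on the finite linear span of an ONB $\{u_n\}$ of a Hilbert space $\mathcal{K}$, is essentially self-adjoint in $\mathcal{K}$, its closure being the maximal diagonal self-adjoint operator $\overline{A}f=\sum_n\mu_n<f,u_n>u_n$ with $\sigma(\overline{A})=\overline{\{\mu_n\}}$. To keep the argument self-contained I would include the short verification: symmetry is immediate; if $A^{*}g=\pm i g$ then $(\mu_n\pm i)<g,u_n>=0$ for every $n$, forcing $<g,u_n>=0$ for all $n$ and hence $g=0$, so the deficiency indices of $A$ vanish; each $\mu_n$ is then an eigenvalue of $\overline{A}$, while for $\mu\notin\overline{\{\mu_n\}}$ the resolvent $(\overline{A}-\mu)^{-1}$ is bounded by $\mathrm{dist}(\mu,\overline{\{\mu_n\}})^{-1}$, which pins down the spectrum. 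Applying this with $(\mathcal{K},\{u_n\},\mu_n)=(\cH_{-Q},\{\phi_n\},\lambda_n)$ and with $(\cH_{Q},\{\psi_n\},\lambda_n)$ yields the two assertions. There is no serious obstacle here, as the corollary is essentially a repackaging of Proposition~\ref{NEE}; the only points requiring a little care are checking that the $J$-orthonormality genuinely diagonalizes $H_{\phi,\psi}$ on $\mbox{span}\{\phi_n\}$ \emph{in the $\cH_{-Q}$-inner product}, and establishing \emph{essential} self-adjointness rather than merely the existence of a self-adjoint extension, which the deficiency-index computation settles.
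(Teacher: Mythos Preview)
Your argument is correct and follows essentially the same route the paper indicates: the paper defers the proof to \cite[Sections~5,~6]{KKS}, but the Remark immediately after Corollary~\ref{NEE1} sketches precisely your reasoning---use Proposition~\ref{NEE} (resp.\ Proposition~\ref{new11}) to obtain that $\{\phi_n\}$ and $\{\psi_n\}$ are ONB of $\cH_{-Q}$ and $\cH_{Q}$, then invoke the elementary fact that a real diagonal operator on the span of an ONB is essentially self-adjoint with spectrum the closure of its eigenvalues. Your inclusion of the deficiency-index verification makes the argument self-contained, which the paper does not do.
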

 
\begin{rem}
Corollary \ref{NEE1} can be easy extended for the general case of GRS $\{\phi_n=e^{Q/2}e_n\}$. 
Indeed, in view of Proposition \ref{new11}, $\{\phi_n\}$  and $\{\psi_n\}$  are ONB of the Hilbert spaces $\cH_{-Q}$ and $\cH_{Q}$, respectively. 
Therefore, the operators  $H_{\phi,\psi}$ and $H_{\psi,\phi}$ defined on $\mbox{span}\{\phi_n\}$ and $\mbox{span}\{\psi_n\}$
have to be essentially self-adjoint in $\cH_{-Q}$ and $\cH_{Q}$. 
 This approach can be used for $J$-orthonormal sequences of the second type. 
The principal difference is:
for the first type sequence, the new scalar product in  $\cH_{-Q}$ is  directly determined by the known  indefinite inner product
$[\cdot, \cdot]$, see \eqref{AGH25}. 
For the second type sequence,  the inner product  $<\cdot,\cdot>_{-Q}$ cannot be expressed via  $[\cdot, \cdot]$
 and it becomes more complicated to determine $<\cdot,\cdot>_{-Q}$.
\end{rem}
 
\section{Examples}\label{sec5}

\subsection{Eigenfunctions of the shifted harmonic oscillator}
 
 In the space $\Lc^2(\mathbb{R})$ we define a fundamental symmetry $J$ as  the space parity operator $\mathcal{P}f(x)=f(-x)$.
 The  indefinite inner product is
 $$
 [f, g]= <\mathcal{P}f, g>=\int_{-\infty}^\infty{f(-x)}\overline{g(x)}dx.
 $$
  The Hermite functions $e_n(x)$ in \eqref{AGH88} form an ONB of 
  $\Lc^2(\mathbb{R})$ and  $\mathcal{P}e_n=(-1)^ne_n$.
 
 Define  
\begin{equation}\label{AGH100}
 \phi_n(x)=e_n(x+ia),  \quad \psi_n(x)=e_n(x-ia),  \quad    a\in\mathbb{R}\setminus\{0\}, \quad n=0,1,2,\ldots
\end{equation}
using the fact that $e_n(x)$ are entire functions. The functions $\{\phi_n\}$ and $\{\psi_n\}$ are
eigenvectors of the shifted harmonic  oscillators  
$$
H=-\frac{d^2}{dx^2}+x^2+2iax  \quad  \mbox{and} \quad H^*=-\frac{d^2}{dx^2}+x^2-2iax,
$$
respectively.  It follows from \cite[Lemma 2.5]{Mit}  that $\{\phi_n\}$ and $\{\psi_n\}$ are regular biorthogonal
sequences and hence,  they are dual GRS (Theorem \ref{new31}).  

To find a self-adjoint operator $Q$ in \eqref{new15}, we calculate the Fourier transform
$F\phi_n=\frac{1}{\sqrt{2\pi}}\int_{-\infty}^\infty{e^{-ix\xi}}\phi_n(x)dx$.
In view of \eqref{AGH100},  $F\phi_n=e^{-a\xi}Fe_n$. 
Therefore,  $\phi_n=F^{-1}e^{-a\xi}Fe_n$.
The last relation can be rewritten as
\begin{equation}\label{AGH101}
\phi_n=e^{Q/2}e_n,  \qquad  e^{Q/2}=F^{-1}e^{-a\xi}F,
\end{equation}
where $Q=2ai\frac{d}{dx}$ is an unbounded self-adjoint operator in $\Lc^2(\mathbb{R})$ that anticommutes with $\mathcal{P}$.
By virtue of Proposition \ref{new2b},   $\{\phi_n\}$  and $\{\psi_n\}$ are $\mathcal{P}$-orhonormal sequences of the first type.
The operator of $\cC$-symmetry generated $\{\phi_n\}$ coincides with $\cC=e^{ai\frac{d}{dx}}\mathcal{P}$ and it is determined 
uniquely.
  
  \subsection{Eigenfunctions of perturbed anharmonic oscillator}

Let $\{e_n\}$ be eigenfunctions  of the anharmonic oscillator 
$$
H_\beta=-\frac{d^2}{dx^2} + |x|^\beta, \qquad  \beta>2
$$ 
Without loss of generality we assume that $\|e_n\|=1$. 
The eigenfunctions $\{e_n\}$ form an orthonormal basis in $\Lc^2(\mathbb{R})$
and they are eigenvectors of $\mathcal{P}$. 

Consider the sequences
 $$
 \phi_n(x)=e^{p(x)}e_n(x),  \qquad  \psi_n(x)=e^{-p(x)}e_n(x) 
 $$
where a real-valued odd function $p\in{C^2}(\mathbb{R})$ 
satisfies \cite[Assumption II]{Mit}. The functions 
$\{\phi_n\}$ and $\{\psi_n\}$ are
eigenvectors of the perturbed anharmonic  oscillators  
$$
H=H_\beta+2p'(x)\frac{d}{dx}+p''(x)- (p'(x))^2  \quad  \mbox{and} \quad H^*=H_\beta-2p'(x)\frac{d}{dx}-p''(x)- (p'(x))^2,
$$
respectively.  The functions  $\{\phi_n\}$ and $\{\psi_n\}$
are determined by \eqref{new15} with the operator of multiplication $Q=2p(x)$ in 
$\Lc^2(\mathbb{R})$ which anticommutes with ${\mathcal P}$.
 Proposition \ref{new2b} implies that  $\{\phi_n\}$ and $\{\psi_n\}$ are $\mathcal{P}$-orhonormal and they are sequences of the first type. 

\subsection{Back to the harmonic oscillator}

Let us go back to Example 1 in Section \ref{sec2}. The vectors $\phi_n(x)$ and $\psi_n(x)$ are respectively eigenstates of $H$ and $H^*$, where
$$
H=\frac{1}{2}\left(-\frac{d^2}{dx^2}\,-x\frac{d}{dx}+\frac{1}{2}\left(\frac{3x^2}{2}-1\right)\right):
$$
$H\phi_n=E_n\phi_n$ and $H^*\psi_n=E_n\psi_n$, where $E_n=n+\frac{1}{2}$. Simple computations show that $H^*=\frac{1}{2}\left(-\frac{d^2}{dx^2}\,+x\frac{d}{dx}+\frac{1}{2}\left(\frac{3x^2}{2}+1\right)\right)$, see \cite{BB}. The sets $\{\phi_n\}$ and $\{\psi_n\}$ are both complete in $\Lc^2(\Bbb R)$. 
Hence they are regular and dual GRS, as a consequence of Theorem \ref{new31}. 

The interesting aspect of this example is that the operator $Q=-\frac{x^2}{2}$ does not anticommute with
the space parity operator $\mathcal{P}$. In fact, we have $Q\mathcal{P}=\mathcal{P}Q$. 
Then Proposition \ref{new2b} suggests that the set $\{\phi_n\}$ cannot be $\mathcal{P}$-orthonormal. 
In fact, it is possible to check that this is what happens. To do that, we first notice that, because of the parity properties of the Hermite polynomials, we have
$$
[\phi_n,\phi_m]=\frac{(-1)^n}{\sqrt{2^{n+m}\,n!\,m!\,\sqrt{2}}}\,\int_{\Bbb R}H_n(x)\,H_m(x)e^{-\frac{3x^2}{2}}\,dx.
$$
This integral is zero if $n+m$ is odd. But, if $n+m$ is even, the result is non zero. In fact, after some computations, we get
$$
|[\phi_n,\phi_m]|=\sqrt{\frac{2^{n+m+1}}{3^{n+m+1}\,\pi\,n!\,m!}}\,\Gamma\left(\frac{n+m+1}{2}\right) {}_2F_1\left(-m,n;\frac{1-m-n}{2};\frac{3}{2}\right),
$$
where $\Gamma$ and ${}_2F_1$ are respectively the Gamma and the Hypergeometric functions. The result is not zero, if $n+m$ is even.
 Hence the set  $\{\phi_n\}$ is not $\mathcal{P}$-orthonormal, as expected. 

\section{Conclusions}
In this paper, motivated by many (already existing or) possible applications to 
$\mathcal{PT}$-quantum mechanics, we have derived some properties of vectors which are complete in a given Hilbert space, 
and orthonormal with respect to an indefinite inner product. 
In this analysis we have heavily used generalized Riesz systems and $\mathcal{G}$-quasi bases, 
and we have introduced two different types of $J$-orthonormal sequences, those of the first and those of the second type, 
depending on the validity of a certain anti-commutation relation between $J$ and $Q$. 
Examples of both kind have been proposed, all arising from harmonic or anharmonic oscillators.

Among our future projects we plan to study physical operators constructed, following 
\cite{BIT, Inoue1}, by the bi-orthogonal sets considered  along this paper and to check under which conditions a given Hamiltonian can be factorized. 
When this is possible, we will also consider the properties of its SUSY partner.

\section*{Acknowledgements}
FB   acknowledges support from the GNFM of Indam and from the University of Palermo, via CORI.
SK  acknowledges support from the Polish Ministry of Science and
Higher Education.

\section*{References}

\end{document}